\newcommand{\myquot}[1]{``#1''}
\newcommand{\coloneq}{\mathop{:=}}
\newcommand{\cceq}{\mathop{::=}}
\renewcommand{\epsilon}{\varepsilon}
\newcommand{\pow}[1]{2^{#1}}
\newcommand{\nats}{\mathbb{N}}
\newcommand{\card}[1]{|#1|}
\newcommand{\size}[1]{\card{#1}}
\newcommand{\set}[1]{\{#1\}}
\newcommand{\inc}{\kappa}
\newcommand{\ttrue}{\texttt{tt}}
\newcommand{\ffalse}{\texttt{ff}}
\newcommand{\F}{\mathbf{F}}
\newcommand{\G}{\mathbf{G}}
\newcommand{\U}{\mathbf{U}}
\newcommand{\X}{\mathbf{X}}
\newcommand{\R}{\mathbf{R}}
\newcommand{\Var}{\mathcal{V}} 
\newcommand{\cl}{\mathrm{cl}}
\newcommand{\var}{\mathrm{var}} 
\newcommand{\varG}{\mathrm{var}_{\mathbf{G}}}
\newcommand{\varF}{\mathrm{var}_{\mathbf{F}}}
\newcommand{\halfthinspace}{{\kern .08333em}}
\newcommand{\conc}{\,;}
\newcommand{\ddiamond}[1]{\langle\/ #1 \/\rangle\,}
\newcommand{\bbox}[1]{[\halfthinspace#1\halfthinspace]\,}
\newcommand{\ddiamondle}[2]{{\langle\/ #1 \/\rangle}_{\!\le #2}\,}
\newcommand{\bboxle}[2]{{[\halfthinspace#1\halfthinspace]}_{\le #2}\,}
\newcommand{\Rexp}{\mathcal{R}}
\newcommand{\rel}[1]{\mathrm{rel}(#1)}
\newcommand{\ltl}{\text{LTL}}
\newcommand{\prompt}{\text{PROMPT}$\textendash$\ltl}
\newcommand{\pltl}{\text{PLTL}}
\newcommand{\pltlg}{\pltl_{{\mathbf{G}}}}
\newcommand{\pltlc}{\text{cPLTL}}
\newcommand{\pltlcf}{\pltlc_{{\mathbf{F}}}}
\newcommand{\pltlcg}{\pltlc_{{\mathbf{G}}}}
\newcommand{\ldl}{\text{LDL}}
\newcommand{\pldl}{\text{PLDL}}
\newcommand{\pldldiamond}{\text{PLDL}_{\Diamond}}
\newcommand{\pldlc}{\text{cPLDL}}
\newcommand{\pldlcdiamond}{\pldlc_{\Diamond}}
\newcommand{\pldlcbox}{\pldlc_{\Box}}
\newcommand{\mpltlc}{\text{mult-}\pltlc}
\newcommand{\mpldlc}{\text{mult-}\pldlc}
\newcommand{\mpltlcf}{\text{mult-}\pltlcf}
\newcommand{\mpltlcg}{\text{mult-}\pltlcg}
\newcommand{\vldl}{\text{VLDL}}
\newcommand{\vpmutl}{VP\text{-}\mu TL}
\newcommand{\pspace}{\textsc{{PSpace}}}
\newcommand{\exptime}{\textsc{{ExpTime}}}
\newcommand{\twoexp}{\textsc{{2ExpTime}}}
\newcommand{\threeexp}{\textsc{{3ExpTime}}}
\newcommand{\sys}{\mathcal{S}}
\newcommand{\trace}{\mathrm{tr}}
\newcommand{\cst}{\text{cst}}
\newcommand{\aut}{\mathfrak{A}}
\newcommand{\arena}{\mathcal{A}}
\newcommand{\game}{\mathcal{G}}
\newcommand{\mem}{\mathcal{M}}
\newcommand{\update}{\mathrm{upd}}
\newcommand{\nextmove}{\mathrm{nxt}}
\title{Parameterized Linear Temporal Logics Meet Costs:\\ Still not Costlier than LTL\thanks{Supported by the project “TriCS”~(ZI 1516/1-1) of the German Research Foundation (DFG). A preliminary version of this work appeared in the proceedings of GandALF 2015~\cite{Zimmermann15gandalf}.}}
\author{Martin Zimmermann}
\institute{Reactive Systems Group, Saarland University, 66123 Saarbrücken, Germany\\
 \email{\{zimmermann\}@react.uni-saarland.de}}
\begin{document}

\maketitle

\begin{abstract}
We continue the investigation of parameterized extensions of Linear Temporal Logic (LTL) that retain the attractive algorithmic properties of LTL: a polynomial space model checking algorithm and a doubly-exponential time algorithm for solving games. Alur et al.\ and Kupferman et al.\ showed that this is the case for Parametric LTL (PLTL) and PROMPT-LTL respectively, which have temporal operators equipped with variables that bound their scope in time. Later, this was also shown to be true for Parametric LDL (PLDL), which extends PLTL to be able to express all $\omega$-regular properties. 

Here, we generalize PLTL to systems with costs, i.e., we do not bound the scope of operators in time, but bound the scope in terms of the cost accumulated during time. Again, we show that model checking and solving games for specifications in  PLTL with costs is not harder than the corresponding problems for LTL. Finally, we discuss PLDL with costs and extensions to multiple cost functions. 
\end{abstract}

\section{Introduction}
\label{sec_intro}
Parameterized linear temporal logics address a serious shortcoming of Linear-temporal Logic ($\ltl$)~\cite{Pnueli77}: $\ltl$ is not able to express timing constraints, e.g., while $\G(q \rightarrow \F p)$ expresses that every request~$q$ is eventually answered by a response~$p$, the waiting time between requests and responses might diverge. This is typically not the desired behavior, but cannot be ruled out by $\ltl$.

To overcome this shortcoming, Alur et al.\ introduced parameterized $\ltl$ ($\pltl$)~\cite{AlurEtessamiLaTorrePeled01}, which extends $\ltl$ with parameterized operators of the form $\F_{\le x}$ and $\G_{\le y}$, where $x$ and $y$ are variables. The formula~$\G(q \rightarrow \F_{\le x} p)$ expresses  that every request is answered within an arbitrary, but fixed number of steps~$\alpha(x)$. Here, $\alpha$ is a variable valuation, a mapping of variables to natural numbers. Typically, one is interested in whether a $\pltl$ formula is satisfied with respect to some variable valuation. For example, the model checking problem asks whether a given transition system satisfies a given $\pltl$ specification~$\varphi$ with respect to some $\alpha$, i.e., whether every path satisfies $\varphi$ with respect to $\alpha$. Similarly, solving infinite games amounts to determining whether there is an $\alpha$ such that Player~$0$ has a strategy such that every play that is consistent with the strategy satisfies the winning condition with respect to $\alpha$. Alur et al.\ showed that the $\pltl$ model checking problem is $\pspace$-complete. Kupferman et al.\ later considered $\prompt$~\cite{KupfermanPitermanVardi09}, which can be seen as the fragment of $\pltl$ without the parameterized always operator, and showed that $\prompt$ model checking is still $\pspace$-complete and that $\prompt$ realizability, an abstract notion of infinite game, is $\twoexp$-complete. While the results of Alur et al. relied on involved pumping arguments, the results of Kupferman et al.\ where all based on the so-called alternating-color technique, which basically allows to reduce $\prompt$ to $\ltl$. Furthermore, the result on realizability was extended to infinite games on graphs~\cite{Zimmermann13}, again using the alternating-color technique.

Another serious shortcoming of $\ltl$ (and its parameterized variants) is their expressiveness: $\ltl$ is equi-expressive to first-order logic with order~\cite{Kamp68} and thus not as expressive as $\omega$-regular expressions. This shortcoming was addressed by a long line of temporal logics~\cite{GiacomoVardi13,LeuckerSanchez07,Vardi11,VardiWolper94,Wolper1983} with regular expressions, finite automata, or grammar operators to obtain the full expressivity of the $\omega$-regular languages. One of these logics is Linear Dynamic Logic ($\ldl$), which has temporal operators~$\ddiamond{r}$ and $\bbox{r}$, where $r$ is a regular expression. For example, the formula~$\bbox{r_0}(q \rightarrow \ddiamond{r_1} p)$ holds in a word~$w$, if every request at a position~$n$ such that $w_0 \cdots w_{n-1}$ matches $r_0$, there is a position~$n' \ge n$ such that $p$ holds at $n'$ and $w_n \cdots w_{n'-1}$ matches $r_1$. Intuitively, the diamond operator corresponds to the eventuality of $\ltl$, but is guarded by a regular expression. Dually, the box operator is a guarded always. 
Although $\ldl$ is more expressive than $\ltl$, its algorithmic properties are similar: model checking is $\pspace$-complete and solving games is $\twoexp$-complete~\cite{Vardi11}. 

There are temporal logics whose expressiveness goes even beyond the $\omega$-regular languages to capture properties of recursive programs, which are typically $\omega$-contextfree. The visibly $\omega$-contextfree languages~\cite{AlurM04} are an important class of languages located between the $\omega$-regular ones and the $\omega$-contextfree ones that enjoys desirable closure properties, which make it suitable to be employed in verification. Temporal logics that capture this class are visibly $\ltl$~\cite{bozzelli14b}, the fixed-point logic~$\vpmutl$~\cite{bozzelli07}, and visibly~$\ldl$ ($\vldl$)~\cite{WeinertZimmermann15}. The logic~visibly~$\ltl$ enhances $\ltl$ with visibly rational expressions~\cite{bozzelli14a}, and $\vpmutl$ extends the linear-time $\mu$-calculus~\cite{Vardi88} with non-local modalities. Finally, $\vldl$ has the same temporal operators as $\ldl$, but allows to use visibly pushdown automata instead of regular expressions as guards. For all these logics, model checking is $\exptime$-complete, i.e., (under standard complexity theoretic assumptions) harder than the model checking problem for $\ltl$. Furthermore, solving games with $\vldl$ winning conditions is $\threeexp$-complete, again harder than solving $\ltl$ games.  Thus, going beyond the $\omega$-regular languages does increase the complexity of these problems at last. 

All these logics tackle one shortcoming of $\ltl$, but not both simultaneously. This was achieved for the first time by adding parameterized operators to $\ldl$. The logic, called parameterized $\ldl$ ($\pldl$), has additional operators $\ddiamondle{r}{x}$ and $\bboxle{r}{y}$ with the expected semantics: the variables bound the scope of the operator. And even for this logic, which has parameters and is more expressive than $\ltl$, model checking is still $\pspace$-complete and solving games is $\twoexp$-complete~\cite{FaymonvilleZimmermann14}. Again, these problems were solved by an application of the alternating-color technique. One has to overcome some technicalities, but the general proof technique is the same as for $\prompt$. 

The decision problems for the parameterized logics mentioned above are boundedness problems, e.g., one asks for an upper bound on the waiting times between requests and responses in case of the formula~$\G(q \rightarrow \F_{\le x} p)$. Recently, more general boundedness problems in logics and automata received a lot of attention to obtain decidable quantitative extensions of monadic second-order logic and better synthesis algorithms. In general, boundedness problems are undecidable for automata with counters, but become decidable if the acceptance conditions can refer to boundedness properties of the counters, but the transition relation has no access to counter values. Recent advances include logics and automata with bounds~\cite{Bojanczyk04,BojanczykColcombet06}, satisfiability algorithms for
these logics~\cite{Bojanczyk11,Bojanczyk14,BojanczykTorunczyk12,Boom11}, and regular
cost-functions~\cite{Colcombet09}. However, these formalisms, while very expressive, are intractable and thus not suitable for verification and synthesis. 
Thus, less expressive formalisms were studied that appear more suitable for practical applications, e.g., finitary parity~\cite{ChatterjeeHenzingerHorn09}, parity with costs~\cite{FijalkowZimmermann14}, energy-parity~\cite{ChatterjeeDoyen10}, mean-payoff-parity~\cite{ChatterjeeHenzingerJurdzinski05}, consumption games~\cite{BCKN12}, and the use of weighted automata for specifying quantitative properties~\cite{BCHJ09,CernyChatterjeeHenzingerRadhakrishnaSingh11}. In particular, the parity condition with cost is defined in graphs whose edges are weighted by natural numbers (interpreted as costs) and requires the existence of a bound~$b$ such that almost every occurrence of an odd color is followed by an occurrence of a larger even color such that the cost between these positions is at most $b$. Although strictly stronger than the classical parity condition, solving parity games with costs is as hard as solving parity games~~\cite{FijalkowZimmermann14,MogaveroMS13}. 

We investigate parameterized temporal logics in a weighted setting similar to the one of parity conditions with costs: our graphs are equipped with cost-functions that label the edges with natural numbers and parameterized operators are now evaluated with respect to cost instead of time, i.e., the parameters bound the accumulated cost instead of the elapsed time. Thus, the formula~$\G( q \rightarrow \F_{\le x}p)$ requires that every request~$q$ is answered with cost at most $\alpha(x)$. We show the following results about $\pltl$ with costs ($\pltlc$):

First, we refine the alternating-color technique to the cost-setting, which requires to tackle some technical problems induced by the fact that accumulated cost, unlike time, does not increase in every step, e.g., if an edge with cost zero is traversed. In particular, infinite paths with finite cost have to be taken care of appropriately. 

Second, we show that Kupferman et al.'s proofs based on the alternating-color technique can be adapted to the cost-setting as well. For model checking, we again obtain $\pspace$-completeness while solving games is still $\twoexp$-complete.

Third, we consider $\pldl$ with costs ($\pldlc$), which is defined as expected: the diamond and the box operator may be equipped with parameters bounding their scope. Again, the complexity does not increase: model checking is $\pspace$-complete while solving games is $\twoexp$-complete. 

Fourth, we generalize both logics to a setting with multiple cost-functions. Now, the parameterized temporal operators have another parameter that determines the cost-function under which they are evaluated. Even these extensions do not increase complexity: model checking is again $\pspace$-complete while solving games is still $\twoexp$-complete. 

Fifth, we also study the optimization variant of the  model checking and the game problem for these logics: here, one is interested in finding the \emph{optimal} variable valuation for which a given transition system satisfies the specification. For example, for the request-response condition one is interested in minimizing the waiting times between requests and responses. For $\pltlc$ and $\pldlc$, we show that the model checking optimization problem can be solved in polynomial space while the optimization problem for infinite games can be solved in triply-exponential time. These results are similar to the ones obtained for $
\pltl$~\cite{AlurEtessamiLaTorrePeled01,Zimmermann13}. In particular, the exponential gap between the decision and the optimization variant of solving infinite games exists already for $\pltl$. Whether this gap can be closed is an open problem. A first step towards this direction was made by giving an approximation algorithm for this problem with doubly-exponential running time~\cite{TentrupWeinertZimmermann15}.

The paper is structured as follows: in Section~\ref{sec_defs}, we introduce $\pltlc$ and discuss basic properties. Then, in Section~\ref{sec_altcolor}, we extend the alternating-color technique to the setting with costs, which we apply in Section~\ref{sec_mc} to the model checking problem and in Section~\ref{sec_games} to solve infinite games. In Section~\ref{sec_ldl}, we extend these results to $\pldlc$ and to multiple cost-functions. Finally, in Section~\ref{sec_opt}, we investigate model checking and game-solving as optimization problems. 

\section{Parametric LTL with Costs}
\label{sec_defs}
Let $\Var$ be an infinite set of variables and let $P$ be a set of atomic propositions. The formulas of $\pltlc$ are given by the grammar
\begin{equation*}\varphi \cceq p \mid \neg p \mid \varphi \wedge \varphi \mid \varphi \vee
\varphi
  \mid \X \varphi \mid \varphi \U \varphi \mid \varphi \R \varphi \mid
  \mathbf{F}_{ \le z } \varphi \mid \mathbf{G}_{ \le z} \varphi
,\end{equation*}
where $p \in P$ and $z \in \Var$. We use the derived operators $\ttrue \coloneq p
\vee \neg p$ and $\ffalse \coloneq p \wedge \neg p$ for some fixed $p \in P$, $\F
\varphi \coloneq \ttrue \U \varphi$, and $\G \varphi \coloneq \ffalse \R \varphi$. Furthermore, we use $p \rightarrow \varphi$ and $\neg p \rightarrow \varphi$ as shorthand for $\neg p \vee \varphi$ and $p \vee \varphi$, respectively. Additional derived operators are introduced on page~\pageref{page_derivedops}.

The set of subformulas of a $\pltlc$ formula~$\varphi$ is denoted by $\cl( \varphi )$
and we define the size of $\varphi$ to be the cardinality of $\cl(\varphi)$.
Furthermore, we define 
\[\varF( \varphi ) = \{ z\in \Var \mid \F_{\le z} \psi \in
\cl( \varphi) \}\] to be the set of variables parameterizing eventually operators in
$\varphi$, \[\varG( \varphi ) = \{ z\in \Var \mid \G_{\le z} \psi \in \cl( \varphi)  \} \]
to be the set of variables parameterizing always operators in $\varphi$. Furthermore,
$\var( \varphi ) = \varF( \varphi ) \cup \varG( \varphi )$ denotes the set of all variables appearing in $\varphi$. 

$\pltlc$ is evaluated on so-called cost-traces (traces for short) of the form 
\[w = w_0\, c_0\, w_1\, c_1\, w_2\, c_2\, \cdots \in \left( \pow{P} \cdot\ \nats \right)^{ \omega },\] which encode the evolution of the system in terms of the atomic propositions that hold true in each time instance, and the cost of changing the system state. The cost of the trace~$w$ is defined as $\cst(w) = \sum_{j \ge 0}c_j$, which might be infinite. A finite cost-trace is required to begin and end with an element of $\pow{P}$. The cost~$\cst(w)$ of a finite cost-trace~$w = w_0 c_0 w_1 c_1 \cdots c_{n-1} w_n$ is defined as $\cst(w) = \sum_{j=0}^{n-1}c_j$. 

Furthermore, we require the existence of a distinguished atomic proposition~$\inc$ such that all cost-traces satisfy $c_j >0 $ if and only if $\inc \in w_{j+1}$, i.e., $\inc$ indicates that the last step had non-zero cost. We use the proposition~$\inc$ to reason about costs: for example, we are able to express whether a trace has cost~zero and whether a trace has cost~$\infty$. In the following, we will ensure that all our systems only allow traces that satisfy this assumption. 

Also, to evaluate formulas we need to  instantiate the variables parameterizing the temporal operators. To this end, we define a variable valuation to be a
mapping~$\alpha\colon \Var \rightarrow \nats$. 
Now, we can define the model
relation between a cost-trace~$w = w_0\, c_0\, w_1\, c_1\, w_2\, c_2\, \cdots $, a
position~$n$ of $w$, a variable valuation~$\alpha$, and a~$\pltlc$ formula as
follows:
\begin{itemize}
\item $(w,n,\alpha)\models p$ if and only if  $p\in w_n$,

\item $(w,n,\alpha)\models\neg p$ if and only if  $p\notin
w_n$,

\item $(w,n,\alpha)\models\varphi\wedge\psi$ if and only if
$(w,n,\alpha)\models\varphi$ and
$(w,n,\alpha)\models\psi$,

\item $(w,n,\alpha)\models\varphi\vee\psi$ if and only if
$(w,n,\alpha)\models\varphi$ or $(w,n,\alpha)\models\psi$,

\item $(w,n,\alpha)\models\X\varphi$ if and only if  $(w,n+1,\alpha)\models\varphi$,

\item $(w,n,\alpha)\models\varphi\U\psi$ if and only if there exists a $j\ge 0$
such that
$(w,n+j,\alpha)\models\psi$ and $(w,n+k,\alpha)\models\varphi$ for every $k$ in the
range $0\le k< j $,

\item $(w,n,\alpha)\models\varphi\R\psi$ if and only if  for every $j\ge 0$: either
$(w,n+j,\alpha)\models\psi$
or there exists a $k$ in the range $0\le k < j$ such that
$(w,n+k,\alpha)\models\varphi$,

\item $(w,n,\alpha)\models\F_{\le z}\varphi$ if and only if there exists a $j \ge 0$ with \newline$\cst(w_{n} c_n \cdots c_{n+j-1} w_{n+j}) \le \alpha(z)$ such that $(w,n+j,\alpha)\models\varphi$, and

\item $(w,n,\alpha)\models\G_{\le z}\varphi$ if and only if for every $j \ge 0$ with \newline$\cst(w_{n} c_n \cdots c_{n+j-1} w_{n+j}) \le \alpha(z)$: $(w,n+j,\alpha)\models\varphi$.

\end{itemize}
Note that we recover the semantics of $\pltl$ as the special case where every~$c_n$ is equal to one. 

For the sake of brevity, we write $(w,\alpha) \models \varphi$ instead of
$(w,0,\alpha) \models \varphi$ and say that $w$ is a model of $\varphi$ with
respect to $\alpha$. For variable-free formulas, we even drop the $\alpha$ and write $w \models \varphi$.

As usual for parameterized temporal logics, the use of variables has to be
restricted: bounding eventually and always operators by the same variable leads
to an undecidable satisfiability problem~\cite{AlurEtessamiLaTorrePeled01}.

\begin{definition}
\label{def_wellformedformula}
A $\pltlc$ formula~$\varphi$ is well-formed, if $\varF( \varphi ) \cap \varG( \varphi ) =
\emptyset$.
\end{definition}

In the following, we only consider well-formed formulas and omit the qualifier~\myquot{well-formed}. Also, we will denote
variables in $\varF( \varphi )$ by $x$ and variables in $\varG( \varphi )$ by $y$, if the formula~$\varphi$ is clear from context.

We consider the following fragments of $\pltlc$. Let $\varphi$ be a $\pltlc$ formula:
\begin{itemize}
\item $\varphi$ is an $\ltl$ formula, if $\var( \varphi ) = \emptyset$.

\item $\varphi$ is a $\pltlcf$ formula, if $\varG(\varphi) = \emptyset$.

\item $\varphi$ is a $\pltlcg$ formula, if $\varF(\varphi) = \emptyset$.

\end{itemize}
Every $\ltl$, $\pltlcf$, and every $\pltlcg$ formula is well-formed by
 definition.

\begin{example}\hfill
\begin{enumerate}
	\item The formula~$\G (q \rightarrow \F_{\le x}p)$ is satisfied with respect to $\alpha$, if every request (a position where $q$ holds) is followed by a response (a position where $p$ holds) such that the cost of the infix between the request and the response is at most $\alpha(x)$.
	\item The (max-) parity condition with costs~\cite{FijalkowZimmermann14} can be expressed\footnote{\label{footnote_costparity}Note that the bound in the parity condition with costs may depend on the trace while one typically uses global bounds for $\pltlc$ (see, e.g., Section~\ref{sec_mc} and Section~\ref{sec_games}). However, for games in finite arenas (and thus also for model checking) these two variants coincide~\cite{FijalkowZimmermann14}.} in $\pltlc$ via
\[
\F\G  \left( \bigwedge\nolimits_{c \in \set{1, 3, \ldots, d-1}} \left(c \rightarrow  \F{}_{\le x}\bigvee\nolimits_{c' \in \set{c+1, c+3, \ldots, d}} c' 
\right)\right),
\]
where $d$ is the maximal color, which we assume w.l.o.g.\ to be even. However, the Streett condition with costs~\cite{FijalkowZimmermann14} cannot be expressed in $\pltlc$, as it is defined with respect to multiple cost functions, one for each Streett pair. We extend $\pltlc$ to multiple cost functions in Section~\ref{sec_mult}.

\end{enumerate}
\end{example}

As for $\pltl$, one can also parameterize the until and the release operator and also consider bounds of the form~\myquot{$>\!z$}. However, this does not increase expressiveness of the logic. Formally, we define \label{page_derivedops}
\begin{itemize}

\item $(w,n,\alpha)\models\varphi \U_{\le z}\psi$ if and only if there exists a $j \ge 0$ with \newline$\cst(w_{n} c_n \cdots c_{n+j-1} w_{n+j}) \le \alpha(z)$ such that $(w,n+j,\alpha)\models\psi$ and $(w,n+k,\alpha)\models\varphi$ for every $k$ in the range $0\le k< j $, 

\item $(w,n,\alpha)\models\varphi\R_{\le z}\psi$ if and only if for every $j \ge 0$ with \newline$\cst(w_{n} c_n \cdots c_{n+j-1} w_{n+j}) \le \alpha(z)$: $(w,n+j,\alpha)\models\psi$ or there exists a $k$ in the range $0\le k < j$ such that
$(w,n+k,\alpha)\models\varphi$,
\item $(w,n,\alpha)\models\F_{> z}\varphi$ if and only if there exists a $j \ge 0$ with \newline$\cst(w_{n} c_n \cdots c_{n+j-1} w_{n+j}) > \alpha(z)$ such that $(w,n+j,\alpha)\models\varphi$, and

\item $(w,n,\alpha)\models\G_{> z}\varphi$ if and only if for every $j \ge 0$ with $\newline\cst(w_{n} c_n \cdots c_{n+j-1} w_{n+j}) > \alpha(z)$ satisfies $(w,n+j,\alpha)\models\varphi$.

\item $(w,n,\alpha)\models\varphi \U_{> z}\psi$ if and only if there exists a $j \ge 0$ with \newline$\cst(w_{n} c_n \cdots c_{n+j-1} w_{n+j}) > \alpha(z)$ such that $(w,n+j,\alpha)\models\psi$ and $(w,n+k,\alpha)\models\varphi$ for every $k$ in the range $0\le k< j $, and

\item $(w,n,\alpha)\models\varphi\R_{> z}\psi$ if and only if for every $j \ge 0$ with \newline $\cst(w_{n} c_n \cdots c_{n+j-1} w_{n+j}) > \alpha(z)$: $(w,n+j,\alpha)\models\psi$ or there exists a $k$ in the range $0\le k < j$ such that
$(w,n+k,\alpha)\models\varphi$.

\end{itemize}
Let $\varphi \equiv \psi$ denote equivalence of the formulas~$\varphi$ and $\psi$, i.e., for every $w$, every $n$, and every $\alpha$, we have $(w, n, \alpha) \models \varphi$ if and only if $(w, n, \alpha) \models \psi$. Then, we have the following equivalences (which also restrict the use of variables as defined in Definition~\ref{def_wellformedformula}):\medskip

\begin{minipage}[b]{.4\textwidth}
	\begin{itemize}
	\item $\varphi \U_{\le x} \psi \equiv \varphi \U \psi \wedge \F_{\le x} \psi$
\item 	$\varphi \R_{\le y} \psi \equiv \varphi \R \psi \vee \G_{\le y} \psi$
\item 	$\F_{> y} \varphi \equiv \G_{\le y}\F\X(\inc \wedge \F\varphi)$
	\end{itemize}
	\end{minipage}
\begin{minipage}[b]{.5\textwidth}
	\begin{itemize}
\item	$\G_{> x} \varphi \equiv \F_{\le x}\G\X(\neg \inc \vee \G\varphi)$
\item	$\varphi \U_{> y} \psi \equiv \G_{\le y}(\varphi \wedge \F\X(\inc \wedge \varphi\U\psi))$
\item $	\varphi \R_{> x} \psi \equiv \F_{\le x}(\varphi \vee   \G\X(\neg \inc \vee \varphi\R\psi))$

	\end{itemize}
	\end{minipage}\medskip

Note that we defined $\pltlc$ formulas to be in negation normal form.
Nevertheless, a negation can be pushed to the atomic propositions using the
duality of the operators. Thus, we can define the negation of a $\pltlc$
formula.

\begin{lemma}
\label{lemma_pltlnegation}
For every $\pltlc$ formula~$\varphi$ there exists an efficiently constructible
$\pltlc$ formula~$\neg \varphi$ s.t.\
\begin{enumerate}
\item $(w,n,\alpha)\models \varphi$ if and only if $(w,n,
\alpha) \not\models \neg \varphi$ for every $w$, every $n$, and every $\alpha$,
\item $\card{\neg \varphi} =  \card{\varphi}$.
\item If $\varphi$ is well-formed, then so is $\neg \varphi$.
\item If $\varphi$ is an $\ltl$ formula, then so is $\neg \varphi$.
\item If $\varphi$ is a $\pltlcf$ formula, then $\neg \varphi$ is a $\pltlcg$ formula
and vice versa.
\end{enumerate}
\end{lemma}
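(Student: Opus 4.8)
The plan is to define $\neg\varphi$ by structural induction, pushing the negation down to the atomic propositions using De~Morgan's laws together with the duality of the temporal operators. Concretely, I would set $\neg p \coloneq \neg p$ and $\neg(\neg p) \coloneq p$; dualize the Boolean connectives via $\neg(\varphi\wedge\psi) = \neg\varphi \vee \neg\psi$ and $\neg(\varphi\vee\psi) = \neg\varphi \wedge \neg\psi$; and use the self-duality of the next operator, $\neg\X\varphi = \X\neg\varphi$, the duality of until and release, $\neg(\varphi\U\psi) = \neg\varphi\R\neg\psi$ and $\neg(\varphi\R\psi) = \neg\varphi\U\neg\psi$, and the duality of the parameterized operators, $\neg\F_{\le z}\varphi = \G_{\le z}\neg\varphi$ and $\neg\G_{\le z}\varphi = \F_{\le z}\neg\varphi$. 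Since this is a single traversal of the syntax tree replacing each node by its dual, the construction is efficient.

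Property~(1) then follows by a routine structural induction. The Boolean cases are immediate, the $\X$ case unfolds the semantics at position $n+1$, and the $\U$/$\R$ case is the standard $\ltl$ argument. The only cases specific to the cost setting are the two parameterized operators, and here I would simply observe that the side condition $\cst(w_n c_n \cdots c_{n+j-1} w_{n+j}) \le \alpha(z)$ occurs identically in the semantic clauses for $\F_{\le z}$ and $\G_{\le z}$. Hence negating ``there exists $j$ meeting the bound with $(w,n+j,\alpha)\models\varphi$'' yields exactly ``for every $j$ meeting the bound, $(w,n+j,\alpha)\models\neg\varphi$'', which is the semantics of $\G_{\le z}\neg\varphi$, and symmetrically for $\G_{\le z}$. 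Thus the cost annotation carries through unchanged and no new subtlety arises compared to the classical $\ltl$ duality.

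For the remaining properties I would argue combinatorially. The construction is involutive ($\neg\neg\varphi = \varphi$ as a syntactic object) and maps subformulas to subformulas, so $\psi \mapsto \neg\psi$ is a bijection between $\cl(\varphi)$ and $\cl(\neg\varphi)$, which gives $\card{\neg\varphi} = \card{\varphi}$ and settles property~(2). Because $\F_{\le z}$ and $\G_{\le z}$ swap roles under negation while no variable is otherwise introduced or discarded, we obtain $\varF(\neg\varphi) = \varG(\varphi)$ and $\varG(\neg\varphi) = \varF(\varphi)$, hence $\var(\neg\varphi) = \var(\varphi)$. Property~(4) is then immediate; property~(3) follows since $\varF(\neg\varphi) \cap \varG(\neg\varphi) = \varG(\varphi) \cap \varF(\varphi) = \emptyset$ by well-formedness of $\varphi$; and property~(5) follows since $\varG(\varphi) = \emptyset$ forces $\varF(\neg\varphi) = \emptyset$ and vice versa.

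I do not anticipate a genuine obstacle: this is the textbook negation-normal-form transformation, and the single point deserving care is confirming that the cost-bounded eventually and always operators are exact duals. They are, precisely because both quantify over the same set of positions delimited by the cost bound, so the generalization from time to cost leaves the argument intact.
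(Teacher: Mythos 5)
Your proposal is correct and follows essentially the same route as the paper: negation is defined by structurally dualizing each operator ($\wedge/\vee$, $\U/\R$, $\F_{\le z}/\G_{\le z}$), property~(1) is a straightforward induction, and properties~(2)--(5) follow directly from the definition since the construction swaps $\varF$ and $\varG$ while preserving the closure size. The paper merely states these verifications more tersely; your added detail on the identical cost-bound side condition and the bijection on $\cl(\varphi)$ is exactly the content it leaves implicit.
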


\begin{proof}
We construct $\neg \varphi$ by induction over the construction of $\varphi$ using the dualities of the operators:\medskip

\begin{minipage}[b]{.4\textwidth}
	\begin{itemize}
		\item $\neg (p) = \neg p$
		\item $\neg (\varphi \wedge \psi) = (\neg \varphi) \vee (\neg \psi)$
		\item $\neg (\varphi \U \psi) = \neg \varphi \R \neg \psi$
		\item $\neg (\F_{\le x}\varphi ) = \G_{\le x} \neg \varphi$

	\end{itemize}
	\end{minipage}
\begin{minipage}[b]{.4\textwidth}
	\begin{itemize}
		\item $\neg (\neg p) = p$
		\item $\neg (\varphi \vee \psi) = (\neg \varphi) \wedge (\neg \psi)$
		\item $\neg (\varphi \R \psi) = \neg \varphi \U \neg \psi$
 		\item $\neg (\G_{\le y}\varphi ) = \F_{\le y} \neg \varphi$

	\end{itemize}
	\end{minipage}\medskip

The latter four claims of Lemma~\ref{lemma_pltlnegation} follow from the definition of $\neg \varphi$ while the first one can be shown by a straightforward induction over $\varphi$'s construction.
\qed\end{proof}

Another important property of parameterized logics is monotonicity: increasing (decreasing) the values for parameterized eventuality operators (parameterized always operators) preserves satisfaction.

\begin{lemma}
\label{lemma_monotonicity}
Let $\varphi$ be a $\pltlc$ formula and let $\alpha$ and
$\beta$ be variable valuations satisfying $\alpha ( x ) \le \beta ( x)$ for
every $x \in \varF( \varphi)$ and $\alpha ( y ) \ge \beta ( y)$ for every $y \in \varG( \varphi)$. If $(w, \alpha) \models \varphi$, then $(w, \beta) \models
\varphi$.
\end{lemma}

Especially, if we are interested in checking whether a formula is satisfied with respect to some $\alpha$, we can always recursively replace every subformula~$\G_{\le y}\psi$ by 
$\psi \vee \X (\neg \inc \U (\neg \inc \wedge \psi))$, as this is equivalent to $\G_{\le y}\psi$ with respect to every variable valuation mapping $y$ to zero, which is the smallest possible value for $y$. Note that we have to ignore the current truth value of $\inc$, as it indicates the cost of the last transition, not the cost of the next one.

\section{The Alternating-Color Technique for Costs}
\label{sec_altcolor}
Fix a fresh atomic proposition~$p \notin P$. We say that a cost-trace \[w' = w_0' c_0' w_1' c_1' w_2' c_2' \cdots \in \left( \pow{P \cup \set{p}} \cdot \nats \right)^\omega\] is a coloring of a cost trace \[w = w_0 c_0 w_1 c_1 w_2 c_2 \cdots \in \left( \pow{P}  \cdot \nats \right)^\omega,\] if $w_n' \cap P = w_n$ and $c_n' = c_n$ for every $n$, i.e., $w'$ and $w$ only differ in the truth values of the new proposition~$p$. A position~$n$ is a changepoint of $w'$, if $n=0$ or if the truth value of $p$ in $w_{n-1}'$ and $w_{n}'$ differs. A block of $w'$ is an infix $w_n' c_n' \cdots w_{n+j}'$ of $w'$ such that $n$ and $n+j+1$ are successive changepoints. If a coloring has only finitely many changepoints, then we refer to its suffix starting at the last changepoint as its tail, i.e., the coloring is the concatenation of a finite number of blocks and its tail. 

Let $k \in \nats$. We say that $w'$ is $k$-bounded if every block and its tail (if it has one) has cost at most $k$. Dually, we say that $w'$ is $k$-spaced, if every block has cost at least $k$. Note that we do not have a requirement on the cost of the tail in this case.

Given a $\pltlcf$ formula~$\varphi$, let $\rel{\varphi}$ denote the $\ltl$ formula obtained from $\varphi$ by recursively replacing every subformula~$\F_{\le x} \psi$ by
\[(p \rightarrow p\U(\neg p \U \rel{\psi})) \,\, \wedge \,\, 
(\neg p \rightarrow \neg p\U( p \U \rel{\psi})).
\]
Intuitively, the relativized formula requires $\rel{\psi}$ to be satisfied within at most one changepoint. On bounded and spaced colorings, $\varphi$ and $\rel{\varphi}$ are \myquot{equivalent}.

\begin{lemma}[cp.\ Lemma 2.1 of \cite{KupfermanPitermanVardi09}]
\label{lemma_altcolor}
Let $w$ be a cost-trace and let $\varphi$ be a $\pltlcf$ formula.
\begin{enumerate}
	
\item\label{lemma_altcolor_pltl2ltl}
Let $(w, \alpha) \models \varphi$ for some variable valuation $\alpha$. Then, $w' \models \rel{\varphi}$ for every $(k+1)$-spaced coloring $w'$ of $w$, where $k = \max_{x \in \var(\varphi)}\alpha(x)$.

\item\label{lemma_altcolor_ltl2pltl}
Let $w' \models \rel{\varphi}$ for some $k$-bounded coloring $w'$ of $w$. Then, $(w, \alpha ) \models \varphi$, where $\alpha(x) = 2k$ for every $x$.

\end{enumerate}
\end{lemma}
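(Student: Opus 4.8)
The plan is to prove both parts by induction over the structure of $\varphi$, exploiting that the relativization $\rel{\cdot}$ is defined compositionally and is the identity on every connective except the parameterized eventually. Since $\varphi$ is a $\pltlcf$ formula, the only parameterized operator occurring is $\F_{\le x}$; hence $\rel{\cdot}$ leaves the atoms and the operators $\X$, $\U$, $\R$ untouched, and their inductive steps are immediate. For atoms one uses that $w'$ and $w$ agree on $P$; for $\wedge$, $\vee$, $\X$, $\U$, $\R$ one simply pushes the induction hypothesis through the identical outermost operator at the same position. Both parts therefore reduce to the single case $\F_{\le x}\psi$, where the geometry of the coloring meets the cost bound. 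To make the induction carry, I would strengthen both claims to range over all positions $n$, proving for part~\ref{lemma_altcolor_pltl2ltl} the implication $(w,n,\alpha)\models\psi \Rightarrow (w',n)\models\rel{\psi}$ and for part~\ref{lemma_altcolor_ltl2pltl} its converse $(w',n)\models\rel{\psi} \Rightarrow (w,n,\alpha)\models\psi$, with $k$ fixed as in the statement.

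The core of the argument is a dictionary between \emph{accumulated cost} and \emph{number of changepoints crossed}, and the two parts read this dictionary in opposite directions. For part~\ref{lemma_altcolor_pltl2ltl}, assume $(w,n,\alpha)\models\F_{\le x}\psi$, witnessed by a position $n+j$ carrying $\psi$ with $\cst(w_n c_n \cdots w_{n+j}) \le \alpha(x) \le k$. Because $w'$ is $(k+1)$-spaced, every complete block has cost at least $k+1$, so the window from $n$ to $n+j$, having cost at most $k$, cannot contain a complete block; were two changepoints crossed it would fully contain an interior block, a contradiction. Hence $n+j$ lies in the current block or the next one, which is exactly the situation captured by one of the two conjuncts of $\rel{\F_{\le x}\psi}$ (the first if $p$ holds at $n$, the second otherwise): from $\rel{\psi}$ at $n+j$, supplied by the induction hypothesis, the nested-until formula is satisfied at $n$. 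For part~\ref{lemma_altcolor_ltl2pltl} one unravels the two nested until operators of $\rel{\F_{\le x}\psi}$ to obtain a position $n'$ carrying $\rel{\psi}$ that is reached by staying in the current block and then entering at most the next block, i.e.\ within one changepoint. As $w'$ is $k$-bounded, the current and the next block each cost at most $k$, so the cost from $n$ to $n'$ is bounded by $2k = \alpha(x)$; the induction hypothesis then gives $\psi$ at $n'$, whence $(w,n,\alpha)\models\F_{\le x}\psi$.

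I expect the genuine difficulty to lie entirely in the cost bookkeeping at block boundaries, which is exactly where the cost setting departs from the time-based argument of Kupferman et al. Two phenomena must be controlled. First, unlike elapsed time, accumulated cost need not grow at every step and a single step may carry large cost, so the delicate direction ``at most one changepoint implies cost at most $2k$'' cannot be argued via the number of positions traversed but must charge the encountered costs to the two blocks involved; getting the step that realizes the changepoint attributed to the right block, and seeing that the factor two in $\alpha(x)=2k$ together with the strict margin in $(k+1)$-spacing leaves exactly enough slack, is the crux. Second, a trace of finite total cost admits colorings with only finitely many changepoints, whose suffix is the tail: the spacing condition imposes nothing on the tail, and inside it the nested untils can no longer rely on a forthcoming changepoint, so this case needs a separate, direct verification showing that within a tail the relativized eventually collapses to requiring $\rel{\psi}$ somewhere in the tail (where the $k$-boundedness of the tail keeps the cost below the threshold). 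Reconciling these finite-cost tails and zero-cost steps so that \emph{both} directions of the dictionary survive at the boundary is the main obstacle; the remaining inductive steps are routine.
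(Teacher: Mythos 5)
Your proposal is correct and follows essentially the same route as the paper's proof: a structural induction strengthened to all positions~$n$, in which every case except $\F_{\le x}\psi$ is immediate, part~1 resting on the observation that a window of cost at most $k$ cannot fully contain a block of a $(k+1)$-spaced coloring (hence at most one changepoint is crossed, which is exactly what the nested untils of the relativization express), and part~2 on the converse reading that a window crossing at most one changepoint of a $k$-bounded coloring has cost at most $2k$. The difficulties you anticipate in your final paragraph are already discharged by the paper's definitions rather than needing separate cases: $k$-boundedness explicitly constrains the tail as well as the blocks, and part~1 never needs any spacing requirement on the tail, since inside a tail the $p$-value is constant and the nested untils are satisfied outright once $\rel{\psi}$ holds at the witness position.
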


\begin{proof}

Note that $w$ and its colorings coincide on their cost. Hence, when speaking about the cost of an infix or suffix, we do not have to specify whether we refer to $w$ or to a coloring of $w$.

\ref{lemma_altcolor_pltl2ltl}.) Fix a $(k+1)$-spaced coloring~$w'$ of $w$, where $k = \max_{x \in \var(\varphi)}\alpha(x)$. We show that $(w, n, \alpha) \models \varphi$ implies $(w', n) \models \rel{\varphi}$ by induction over the construction of $\varphi$. 

The only non-trivial case is the one of a parameterized eventuality: thus, assume $(w, n, \alpha) \models \F_{\le x}\psi$, i.e., there is a $j$ with $\cst(w_n c_n \cdots c_{n+j-1} w_{n+j}) \le \alpha(x)$ and $(w, n+j, \alpha) \models \psi$. By induction hypothesis, we have $(w', n+j) \models \rel{\psi}$. As $w'$ is $(k+1)$-spaced, i.e., the cost of each block is at least $k+1$, there is at most one changepoint between (and including) the positions $n$ and $n+j-1$ in $w'$. Hence, $(w', n) \models p\U(\neg p \U \rel{\psi}))$, if $p \in w_{n}'$, and $(w', n) \models  \neg p\U( p \U \rel{\psi}))$ otherwise. Thus, $(w', n) \models \rel{\F_{\le x} \psi}$.

\ref{lemma_altcolor_ltl2pltl}.) Dually, fix a $k$-bounded coloring~$w'$ of $w$ and define the variable valuation~$\alpha$ with $\alpha(x) = 2k$ for every $x$. We show that $(w', n) \models \rel{\varphi}$ implies $(w, n, \alpha) \models \varphi$ by induction over the construction of $\varphi$. 

Again, the only non-trivial case is the one of a parameterized eventuality: thus, let $(w', n) \models \rel{\F_{\le x}\psi}$. We assume $(w', n) \models p$ (the other case is dual). Then, we have 
$ (w', n) \models  p\U(\neg p \U \rel{\psi})$,
i.e., $\rel{\psi}$ is satisfied at some position~$n+j$ such that there is at most one changepoint between (and including) the positions~$n$ and $n+j-1$ in $w'$. As $w'$ is $k$-bounded, this implies that the cost of the infix~$w_n c_n \cdots w_{n+j}$ is bounded by $2k$. Furthermore, applying the induction hypothesis yields $(w,n+j, \alpha) \models \psi$. Hence,  $(w,n, \alpha) \models \F_{\le x}\psi$.
\qed
\end{proof}

\section{Model Checking}
\label{sec_mc}
A transition system~$\sys = (S, s_I, E, \ell , \cst)$ consists of a finite directed graph~$(S, E)$, an initial state~$s_I \in S$, a labeling function~$\ell \colon S \rightarrow \pow{P}$, and a cost function $\cst \colon E \rightarrow \nats$. We encode the weights in binary, although the algorithms we present in this section and their running times and space requirements are oblivious to the exact weights. Furthermore, we assume that every state has at least one successor to spare us from dealing with finite paths. Recall our requirement on cost-traces having a distinguished atomic property~$\inc$ indicating the sign of the cost of the previous transition. Thus, we require $\sys$ to satisfy the following property: if $\inc \in \ell(v')$, then $\cst(v,v') >0$ for every edge~$(v,v') \in E$ leading to $v'$. Dually, if $\inc \notin \ell(v')$, then $\cst(v,v') = 0$ for every edge~$(v,v') \in E$.

A path through $\sys$ is a sequence~$\pi = s_0 s_1 s_2 \cdots$ with $s_0 = s_I$ and $(s_n, s_{n+1}) \in E$ for  every $n$. Its cost-trace~$\trace(\pi)$ is defined as
\[
\trace(\pi) = 
\ell(s_0) \cst(s_0, s_1)
\ell(s_1) \cst(s_1, s_2)
\ell(s_2) \cst(s_2, s_3)
\cdots,
\]
which satisfies our assumption on the proposition~$\inc$.

The transition system~$\sys$ satisfies a $\pltlc$ formula~$\varphi$ with respect to a variable valuation~$\alpha$, if the trace of every path through $\sys$ satisfies $\varphi$ with respect to $\alpha$. The $\pltlc$ model checking problem asks, given a transition system~$\sys$ and a $\pltlc$ formula~$\varphi$, whether $\sys$ satisfies $\varphi$ with respect to some~$\alpha$.

\begin{theorem}
\label{thm_mc}
The $\pltlc$ model checking problem is $\pspace$-complete.
\end{theorem}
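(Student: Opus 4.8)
The plan is to prove the two directions of $\pspace$-completeness separately, with the lower bound being immediate and essentially all the work residing in the upper bound, where the alternating-color technique of Section~\ref{sec_altcolor} is leveraged to reduce the problem to ordinary $\ltl$ model checking. For hardness I would observe that $\ltl$ is exactly the fragment of $\pltlc$ with $\var(\varphi)=\emptyset$, and that for a variable-free formula the quantifier ``for some~$\alpha$'' is vacuous. Hence the classical $\pspace$-hardness of $\ltl$ model checking over transition systems (where the costs may be fixed arbitrarily, e.g.\ all to $0$, and $\inc$ labelled accordingly) transfers verbatim to $\pltlc$ model checking.

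For membership, the first step is to eliminate the parameterized always operators. Using the monotonicity of Lemma~\ref{lemma_monotonicity} together with the remark following it, I would replace every subformula $\G_{\le y}\psi$ by $\psi \vee \X(\neg\inc\U(\neg\inc\wedge\psi))$. Since we only ask for satisfaction with respect to \emph{some} $\alpha$, and decreasing the values of $\varG$-variables preserves satisfaction, it suffices to consider the valuation mapping every $y$ to its minimal value $0$; the displayed replacement is equivalent to $\G_{\le y}\psi$ under exactly such valuations. The result is an equisatisfiable $\pltlcf$ formula $\varphi'$ with only $\F_{\le x}$ operators, of size polynomial in $\card{\varphi}$ since duplication is measured by subformula count. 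Applying Lemma~\ref{lemma_altcolor} to $\varphi'$ then yields the $\ltl$ formula $\rel{\varphi'}$, again of polynomial size.

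The heart of the argument is to turn the quantifier alternation ``$\exists\alpha$ such that every path satisfies $\varphi'$'' of the model-checking question, which by monotonicity becomes ``$\exists k$ such that every path satisfies $\varphi'$ under the valuation $\alpha_k$ mapping every variable to $k$'', into a single $\ltl$ model-checking query over a \emph{colored} transition system $\sys_p$. Here $\sys_p$ is the product of $\sys$ with a small gadget that nondeterministically annotates each state with the fresh coloring proposition, constrained so that every block carries positive cost, detectable via $\inc$. I would prove that $\sys$ satisfies $\varphi'$ with respect to some $\alpha$ if and only if $\sys_p$ admits \emph{no} path whose coloring has infinitely many positive-cost changepoints and violates $\rel{\varphi'}$. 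One direction is the contrapositive of Lemma~\ref{lemma_altcolor}.\ref{lemma_altcolor_pltl2ltl}: if the valuation $\alpha_k$ works, then every $(k+1)$-spaced coloring satisfies $\rel{\varphi'}$, so no appropriately spaced violating coloring exists. The converse uses Lemma~\ref{lemma_altcolor}.\ref{lemma_altcolor_ltl2pltl}: absence of a violating colored path lets one exhibit, for each path, a bounded coloring of $\rel{\varphi'}$, yielding the uniform valuation $\alpha_{2k}$. To reconcile the single bound $k$ with the ``for all paths'' quantifier I would exploit finiteness of $\sys_p$: a violating colored path may be taken ultimately periodic, and pumping its loop inflates the cost of the blocks without bound, so that one lasso-shaped witness defeats \emph{every} candidate bound simultaneously. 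Since $\rel{\varphi'}$ and the ``infinitely many positive-cost changepoints'' condition are both $\ltl$-expressible (or given by a polynomial-size automaton), the whole check is an $\ltl$ model-checking instance on a system of polynomial overhead, placing the problem in $\pspace$ by the classical on-the-fly algorithm; crucially, that algorithm only inspects the sign of costs via $\inc$ and is oblivious to the binary weights.

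I expect the main obstacle to be precisely the cost-specific behaviour flagged in the introduction: unlike elapsed time, accumulated cost need not increase at every step, so zero-cost edges make naive colorings degenerate and infinite paths of finite total cost admit no well-behaved spaced coloring at all. Handling this requires the careful definitions of Section~\ref{sec_altcolor}---insisting that blocks carry positive cost and giving the \emph{tail} special treatment (no spacing requirement, automatic boundedness once $k$ exceeds the finite residual cost)---and it is exactly on such finite-cost tails that $\F_{\le x}$ collapses to the unbounded $\F$, so the pumping step must be arranged to inflate only genuinely cost-bearing loops. Getting this bookkeeping right, rather than the complexity accounting, is where the real effort lies.
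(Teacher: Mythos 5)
Your hardness argument and your elimination of the parameterized always operators coincide with the paper's proof, but the core of your membership argument has a genuine gap: the claimed equivalence ``$\sys$ satisfies $\varphi'$ with respect to some $\alpha$ if and only if $\sys_p$ admits no path whose coloring has infinitely many positive-cost changepoints and violates $\rel{\varphi'}$'' is false, and the lasso-pumping argument you offer in its support does not work. Concretely, take $\sys$ to be the path $v_0 v_1 v_2 v_3 v_4 v_5$ with a self-loop on $v_5$, edge costs $1,0,1,0,1$ along the path and $1$ on the self-loop, labels $q \in \ell(v_1)$ and $r \in \ell(v_5)$ (with $\inc$ placed consistently), and let $\varphi' = \G(q \rightarrow \F_{\le x}\, r)$. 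The unique path satisfies $\varphi'$ with respect to $\alpha(x) = 2$, since the request at $v_1$ is answered at $v_5$ with cost $2$. Yet the coloring whose blocks are $\set{v_0,v_1}, \set{v_2,v_3}, \set{v_4,v_5}, \set{v_5,v_5}, \set{v_5,v_5}, \ldots$ has infinitely many changepoints, every block has cost exactly $1$, and it violates $\rel{\varphi'}$: at position $1$ the request $q$ holds, but $r$ does not occur within one changepoint (only positions $1,2,3$ are within one changepoint). So your test reports a counterexample although a valuation exists. The underlying reason is that a coloring whose blocks merely carry positive cost certifies, via the contrapositive of Lemma~\ref{lemma_altcolor}.\ref{lemma_altcolor_pltl2ltl} with $k = 0$, only that the all-zero valuation fails; to defeat \emph{every} $\alpha$ with a single witness one must be able to inflate its blocks to arbitrary cost. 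Pumping the loop of a lasso does not achieve this: since the loop is already repeated $\omega$ times, traversing it more often replicates blocks rather than enlarging them, and blocks lying in the finite prefix---which are exactly the ones causing the violation above---cannot be pumped at all. The correct trace-level statement would be ``for every $k$, some $(k+1)$-spaced coloring of some path violates $\rel{\varphi'}$'', which is an infinite family of queries, not a single $\ltl$ model-checking instance.

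This is precisely where the paper departs from plain $\ltl$ model checking on a polynomially larger system. It builds the product of $\sys$ with the \emph{exponential-size} Büchi automaton $\aut$ for $\neg\rel{\varphi} \wedge \chi$ and searches for a fair path that is \emph{pumpable} in a structural sense: every block contains a vertex repetition of the product whose cycle has non-zero cost (Lemma~\ref{lemma_mccharacterization}). This guarantees both that each block can be pumped independently to arbitrary cost and---because the automaton run is part of the pumped cycles---that every pumped trace still satisfies $\neg\rel{\varphi} \wedge \chi$; neither is available in your setting, where the automaton is kept outside the product. Forcing such repetitions by spacing alone would require roughly $\size{S}\cdot\size{Q}$ positive-cost edges per block, a bound exponential in $\size{\varphi}$ that no polynomial-size $\ltl$ formula over the trace can impose. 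Note also that $\chi$ demands infinitely many changepoints only on infinite-cost paths, whereas your condition discards violations occurring on finite-cost paths entirely (e.g., a single state with a zero-cost self-loop and $\varphi' = \F_{\le x}\, r$ with $r$ nowhere is declared satisfiable by your test); you flag this tail issue but never repair it. Finally, $\pspace$ membership still requires two ingredients missing from your plan: the small-model lemma (Lemma~\ref{lemma_upnonemptiness}) bounding pumpable fair lassos by $\mathcal{O}(n^2)$ in the product, and an on-the-fly guess-and-check of such a lasso, since the product itself is too large to construct explicitly.
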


The proof we give below is a generalization of the one for $\prompt$ by Kupferman et al.~\cite{KupfermanPitermanVardi09}. We begin by showing  $\pspace$-membership. First note that we can restrict ourselves to $\pltlcf$ formulas: given a $\pltlc$ formula~$\varphi$, let $\varphi'$ denote the formula obtained by recursively replacing every subformula $\G_{\le y} \psi$ by $\psi \vee \X (\neg \inc \U (\neg \inc \wedge \psi))$. Due to Lemma~\ref{lemma_monotonicity} and the discussion below it, every transition system~$\sys$ satisfies $\varphi$ with respect to some $\alpha$ if and only if $\sys$ satisfies $\varphi'$ with respect to some~$\alpha'$.

Next, we show how to apply the alternating-color: recall that the classical algorithm for $\ltl$ model checking searches for a fair path, i.e., one that visits infinitely many accepting states, in the product of $\sys$ with a Büchi automaton recognizing the models of the negated specification. If such a path exists, then $\sys$ does not satisfy the specification, as the fair path contains a path~$\pi$ through $\sys$ and an accepting run of the automaton on its trace, i.e., the trace does not satisfy the specification. If there is no such fair path, then $\sys$ satisfies the specification.

For $\pltlc$ we have to find such a path for every $\alpha$ in order to show that $\sys$ does not satisfy the specification with respect to any $\alpha$. To this end, one relativizes the $\pltlcf$ specification as described in Section~\ref{sec_altcolor} and builds an automaton for the negation of the relativized formula in conjunction with a formula that ensures that every ultimately periodic model is both $k$-bounded and $k'$-spaced for some appropriate $k$ and $k'$. Then, we search for a pumpable fair path in the product of the system and the Büchi automaton recognizing the models of the negated specification, which is non-deterministically labeled by $p$. Applying Lemma~\ref{lemma_altcolor} and pumping a fair path through the product appropriately yields a counterexample for every $\alpha$. Thus, model checking is reduced to finding a pumpable fair path. Let us stress again that this algorithm is similar to the one for $\prompt$, we just have to pay attention to some intricacies stemming from the fact that we want to bound the cost, not the waiting time: there might be paths with finite cost, which have to be dealt with appropriately. 

Recall that $p$ is the distinguished atomic proposition used to relativize $\pltlc$ formulas. A colored Büchi graph with costs~$(V, v_I, E, \ell, \cst, F)$ consists of a finite direct graph~$(V, E)$, an initial vertex~$v_I$, a labeling function~$\ell \colon V \rightarrow \pow{\set{p}}$, a cost-function~$\cst \colon E \rightarrow \nats$, and a set~$F \subseteq V$ of accepting vertices. A path $v_0 v_1 v_2 \cdots $ is pumpable, if each of its blocks induced by $p$ contains a vertex repetition such that the cycle formed by the repetition has non-zero cost\footnote{Note that our definition is more involved than the one of Kupferman et al., since we require a cycle with non-zero cost instead of any circle.}. Note that we do not have a requirement on the cost of the tail, if the path has one. The path is fair, if it visits $F$ infinitely often. The pumpable non-emptiness problem asks, given a colored Büchi graph with costs, whether it has an initial pumpable fair path. 

\begin{lemma}
\label{lemma_upnonemptiness}
If a colored Büchi graph with costs has an initial pumpable fair path, then also one of the form~$\pi_0 \pi_1^\omega$ with $\size{\pi_0\pi_1} \in \mathcal{O}(n^2)$, where $n$ is the number of vertices of the graph.
\end{lemma}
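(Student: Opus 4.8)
The plan is to \emph{extract} a lasso from the given witness rather than build one from scratch, so that the non-zero-cost cycles demanded by pumpability can be \emph{inherited} from the blocks of that witness. Fix an initial pumpable fair path $\rho = v_0 v_1 v_2 \cdots$ and let $n = \size{V}$. Two reusable ingredients drive the construction. First, \emph{block contraction}: a single block is a monochromatic infix $a \cdots b$ that, by pumpability, carries a cycle of non-zero cost; since all edge costs are in $\nats$, decomposing that cycle into simple cycles and using that a sum of non-negative numbers is positive only if some summand is, I obtain a \emph{simple} monochromatic cycle of non-zero cost through some vertex $c$, of length at most $n$. Splicing a simple monochromatic $a$-$c$ path and a simple monochromatic $c$-$b$ path (both contractions of infixes of the original block, hence of length at most $n$) around this cycle yields a replacement block of length at most $3n$ with the same colour and endpoints that is still pumpable. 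Second, \emph{block-count reduction}: consecutive blocks alternate colour and each block is entered at a vertex of $V$, so any run of more than $n$ blocks repeats an entry vertex; as two blocks sharing an entry vertex share their colour, the infix of blocks strictly between two such occurrences can be excised without disturbing the changepoint pattern, and iterating leaves at most $n$ blocks.

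Suppose first that $\rho$ has infinitely many changepoints. Some $f \in F$ is visited infinitely often and, by pigeonhole, some vertex $v$ begins a $p$-coloured block infinitely often; I would pick two such occurrences straddling a visit to $f$. This delimits a loop $\sigma$ (an infix from $v$ to $v$ that begins with a $p$-vertex and whose penultimate vertex is $\neg p$, so that $\sigma^\omega$ has a clean, straddle-free periodic block structure) and a stem from $v_I$ to the first occurrence of $v$. Contracting every block of the stem and of $\sigma$ as above — keeping $f$ inside the contraction of its own block — and reducing the block counts (handling the portions of $\sigma$ before and after $f$ separately so that $f$ survives) yields a lasso $\pi_0 \pi_1^\omega$ whose stem and loop each contain $\mathcal{O}(n)$ blocks of length $\mathcal{O}(n)$. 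It is pumpable because every surviving block still carries a non-zero-cost cycle, and fair because $\pi_1$ still meets $F$; hence $\size{\pi_0 \pi_1} \in \mathcal{O}(n^2)$.

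If instead $\rho$ has only finitely many changepoints, its tail is an infinite \emph{monochromatic} suffix visiting $F$ infinitely often; a recurring $f \in F$ in the tail yields a simple monochromatic cycle through $f$, which I take as $\pi_1$. As $\pi_1$ has no internal changepoint and closes without one, $\pi_1^\omega$ introduces no changepoint and constitutes the tail of the lasso, which carries \emph{no} pumpability obligation. The stem is then the finite prefix of $\rho$ up to the tail — a finite list of blocks — to which block contraction and block-count reduction are applied, followed by a simple monochromatic connection into $f$. Again $\size{\pi_0 \pi_1} \in \mathcal{O}(n^2)$.

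The hard part will be preserving pumpability through both reductions: one must argue that \emph{every} surviving block still contains a cycle of \emph{non-zero} cost. This is precisely where the cost setting departs from Kupferman et al., whose argument needed only an arbitrary repetition; here the passage from a positive-cost cycle to a positive-cost \emph{simple} cycle is the crucial extra step (cf.\ the footnote accompanying the definition of pumpability). The remaining work is bookkeeping: checking that excisions respect the colour/changepoint pattern, retaining an accepting vertex inside the loop, and — the reason the tail must be exempted — correctly treating monochromatic infixes of finite cost, for which no pumping cycle need, or can, be produced.
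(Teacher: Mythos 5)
Your proof is correct and follows essentially the same route as the paper's: the same case split on finitely versus infinitely many changepoints, block-count reduction via repeated block-entry vertices (pigeonhole), block shortening that retains a non-zero-cost cycle and any accepting vertex, special treatment of the tail (which carries no pumpability obligation), and alignment of the periodic part at a changepoint. Your write-up is in fact somewhat more explicit than the paper's terse argument — e.g., extracting a \emph{simple} positive-cost cycle by decomposing the pumping cycle into simple cycles, and anchoring the loop at a vertex that starts a $p$-coloured block instead of ``rotating'' it afterwards — but these are refinements of the same construction, not a different one.
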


\begin{proof}
Let $\pi$ be an arbitrary initial pumpable fair path. First, assume it has only finitely many changepoints. If there are two blocks that start with the same vertex, then we can remove all blocks in between and obtain another initial pumpable fair path. Thus, we can assume that $\pi$ has at most $n$ blocks. Furthermore, the length of each block can be bounded by $\mathcal{O}(n)$ by removing cycles while retaining the state repetition with non-zero cost and at least one accepting vertex (provided the block has one). Now, consider the tail: by removing infixes one can find a cycle of length at most $n$ containing an accepting vertex and a path of length at most $n$ leading from the last changepoint to a vertex on the cycle. Hence, we define $\pi_0$ to be the prefix containing all blocks and the path leading to the cycle and define $\pi_1$ to be the cycle. Then, we have $\size{\pi_0\pi_1} \in \mathcal{O}(n^2)$ and $\pi_0 \pi_1^\omega$ is an initial pumpable fair path.

On the other hand, if $\pi$ contains infinitely many changepoints, then we can remove blocks and shorten other blocks as described above until we have constructed a prefix~$\pi_0 \pi_1$ such that $\pi_0\pi_1^\omega$ has the desired properties. In this case, we can assume that the first position of $\pi_1$ is a changepoint by \myquot{rotating} $\pi_1$ appropriately and appending a suitable prefix of it to $\pi_0$.
\qed\end{proof}

Let $\sys = (S, s_I, E, \ell, \cst)$ be a transition system and let $
\varphi$ be a $\pltlcf$ formula. Furthermore, consider the $\ltl$ formula \[\chi = (\G\F p \wedge \G\F \neg p) \leftrightarrow \G\F \inc,\]
which is satisfied by a cost-trace, if the trace has infinitely many changepoints if and only if\footnote{Here, we use our assumption on $\inc$ indicating the sign of the costs.} it has cost~$\infty$. 
 Now, let $\aut = (Q, \pow{P \cup \set{p}}, q_I, \delta, F)$ be a nondeterministic Büchi automaton recognizing the models of the $\ltl$ formula~$\neg \rel{\varphi} \wedge \chi$, which we can pick such that its number of states is bounded exponentially in $\size{\varphi}$.
 Now, define the colored Büchi graph with costs \[\sys \times \aut = (S \times Q \times \pow{\set{p}}, (s_I, q_I, \emptyset), E', \ell', \cst', F') \] where
\begin{itemize}
	\item $((s,q, C),(s',q', C')) \in E'$ if and only if $(s,s') \in E$ and $q' \in \delta(q, \ell(s)\cup C)$,
	\item $\ell(s,q,C) = C$,
	\item $\cst'((s,q,C),(s',q',C')) = \cst(s,s') $, and
	\item $F' = S \times F \times \pow{\set{p}}$.
\end{itemize}

\begin{lemma}
\label{lemma_mccharacterization}[cp.\ Lemma 4.2 of \cite{KupfermanPitermanVardi09}]
$\sys$ does not satisfy $\varphi$ with respect to any $\alpha$ if and only if $\sys \times \aut$ has an initial pumpable fair path.
\end{lemma}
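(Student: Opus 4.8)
The plan is to prove both implications by reading an initial path of $\sys\times\aut$ as a triple consisting of a path~$\pi$ through $\sys$, a coloring~$w'$ of $\trace(\pi)$, and a run of $\aut$ on $w'$: such a triple is a \emph{fair} path exactly when the run is accepting, i.e.\ when $w'\models\neg\rel{\varphi}\wedge\chi$. Throughout I use Lemma~\ref{lemma_altcolor} to switch between the parameterized formula~$\varphi$ and its relativization~$\rel{\varphi}$, once in each direction.

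For the implication from right to left I start, via Lemma~\ref{lemma_upnonemptiness}, with an ultimately periodic pumpable fair path~$\pi_0\pi_1^\omega$. Its projection to $\sys$ is a path, and fairness guarantees that the associated coloring~$w'$ satisfies $\neg\rel{\varphi}\wedge\chi$, in particular $w'\not\models\rel{\varphi}$. Given an arbitrary valuation~$\alpha$, I set $k=\max_{x}\alpha(x)$ and pump, in every block, the guaranteed non-zero-cost cycle often enough to push the block's cost above $k$. Since pumping a cycle of the product returns the automaton component to the same state, the result is again an initial fair path of $\sys\times\aut$ (fairness is witnessed by the unchanged tail when there are finitely many changepoints, or by the repeated, still-accepting period otherwise), so its coloring again satisfies $\neg\rel{\varphi}\wedge\chi$ and is now $(k+1)$-spaced. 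The contrapositive of part~\ref{lemma_altcolor_pltl2ltl} of Lemma~\ref{lemma_altcolor} then yields a path through $\sys$ whose trace violates $\varphi$ with respect to $\alpha$; as $\alpha$ was arbitrary, $\sys$ satisfies $\varphi$ with respect to no valuation.

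For the converse I fix $k\ge\size{S}\cdot\size{Q}\cdot W+W$, where $W$ is the largest edge weight of $\sys$, and put $\alpha(x)=2k$. By assumption there is a path~$\pi$ with $(\trace(\pi),\alpha)\not\models\varphi$; write $w=\trace(\pi)$. The heart of the argument is to build a coloring~$w'$ of $w$ that is simultaneously $k$-bounded, satisfies $\chi$, and induces a pumpable path. I color greedily: I open a block and accumulate cost until its cost lies in the interval~$(\size{S}\cdot\size{Q}\cdot W,\,k]$—possible because single edges cost at most $W$—then flip the color and repeat. If $\cst(w)=\infty$ this produces infinitely many changepoints; if $\cst(w)$ is finite I stop once the remaining suffix is cheap and declare it the tail, giving only finitely many changepoints, so $\chi$ holds in either case. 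Every genuine block then has cost exceeding $\size{S}\cdot\size{Q}\cdot W$, hence more than $\size{S}\cdot\size{Q}$ edges, and so, decomposing the walk into a simple backbone of length at most $\size{S}\cdot\size{Q}$ plus cycles, must contain a repeated product vertex enclosing non-zero cost—this is exactly pumpability. Since $w'$ is $k$-bounded, the contrapositive of part~\ref{lemma_altcolor_ltl2pltl} of Lemma~\ref{lemma_altcolor} gives $w'\not\models\rel{\varphi}$, so $w'\models\neg\rel{\varphi}\wedge\chi$, and any accepting run of $\aut$ on $w'$ completes an initial pumpable fair path.

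The step I expect to be most delicate is this coloring construction, for precisely the reason flagged before Lemma~\ref{lemma_upnonemptiness}: unlike elapsed time, accumulated cost need not grow. A trace whose tail is cheap (or of cost zero) admits no spaced coloring with non-zero-cost cycles, so the finite-cost case must be routed into the tail, where pumpability is not required; if the whole trace is even cheaper than the threshold I fall back to a monochromatic coloring, using that once $\alpha(x)\ge\cst(w)$ every bounded eventuality collapses to an ordinary eventuality, so that $\rel{\varphi}$ fails on the monochromatic coloring exactly when $(w,\alpha)\not\models\varphi$. Fitting the threshold~$k$, the interval for block costs, and this case split together—while keeping the choice of $k$ independent of the a priori unknown cost of $\pi$—is the only place where the cost setting genuinely departs from the time setting of Kupferman et al.
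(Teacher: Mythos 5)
Your proof is correct and follows essentially the same route as the paper's: one direction greedily constructs a $k$-bounded coloring whose blocks are expensive enough to force a non-zero-cost vertex repetition in the product (with finite-cost suffixes routed into the tail so that $\chi$ holds), and the other direction pumps the guaranteed non-zero-cost cycles to make the coloring $(k+1)$-spaced and then applies Lemma~\ref{lemma_altcolor}. The remaining differences are cosmetic — slightly different constants, an unnecessary appeal to Lemma~\ref{lemma_upnonemptiness}, a backbone-plus-cycles decomposition instead of the paper's pigeonhole on non-zero-cost edges, and your cleaner use of $k=\max_{x}\alpha(x)$ with the contrapositive of Lemma~\ref{lemma_altcolor}.\ref{lemma_altcolor_pltl2ltl} where the paper reduces to a uniform valuation via monotonicity.
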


\begin{proof}
Let $\sys$ not satisfy $\varphi$ with respect to any variable valuation. Fix $k = (\size{S} \cdot \size{Q}+3)\cdot W$, where $W$ is the largest cost in $\sys$, and define the valuation~$\alpha$ by $\alpha(x) = 2k$ for every $x$. As $\sys$ does not satisfy $\varphi$ with respect to $\alpha$, there is a path~$\pi$ through $\sys$ with $(\trace(\pi), \alpha) \not \models \varphi$. Thus, due to Lemma~\ref{lemma_altcolor}.\ref{lemma_altcolor_ltl2pltl}, every  $k$-bounded coloring of $w$ does not satisfy $\rel{\varphi}$.

Now, let $w'$ be a $k$-bounded and $(k-W)$-spaced coloring of $\trace(\pi)$ which starts with $p$ not holding true. Such a coloring can always be constructed, as $W$ is the largest cost appearing in $\sys$. Note that $w'$ satisfies $\chi$ by construction. Thus, we have $w' \models \neg\rel{\varphi} \wedge \chi$, i.e., there is an accepting run~$q_0 q_1 q_2 \cdots $ of $\aut$ on $w'$. Consider the path
\[
(s_0, q_0, w'_0 \cap \set{p})
(s_1, q_1, w'_1 \cap \set{p})
(s_2, q_2, w'_2 \cap \set{p})
\cdots
\]
where $s_0 s_1 s_2 \cdots = \pi$, which is fair by construction. We claim that it is pumpable: consider a block, which is $(k-W)$-spaced. Thus, it contains at least $\size{S} \cdot \size{Q}+2$ many edges with non-zero cost, enough to enforce a vertex repetition with non-zero cost in between. To this end, one takes the sets~$V_j$ of vertices visited between the  $j$-th and the $(j+1)$-th edge with non-zero cost (including the $j$-th edge). This yields $\size{S} \cdot \size{Q}+1$ non-empty sets of vertices of $\sys \times \aut$ that coincide in their third component, as we are within one block. However, there are only $\size{S} \cdot \size{Q}$ many such vertices, which yields the desired repetition. 

Now, consider the converse implication and let $\alpha'$ be an arbitrary variable valuation. We show that $\sys$ does not satisfy $\varphi$ with respect to $\alpha'$. Due to Lemma~\ref{lemma_monotonicity}, it is sufficient to show that $\sys$ does not satisfy $\varphi$ with respect to the valuation~$\alpha$ mapping every variable to $k = \min_{x \in \var(\varphi)}\alpha'(x)$. 

Fix an initial pumpable fair path of $\sys \times \aut$. It has a vertex repetition in every block such that the induced cycle has non-zero cost. We pump each such cycle $k+1$ times to obtain the path
\[
(s_0, q_0, C_0)
(s_1, q_1, C_1)
(s_2, q_2, C_2)
\cdots.
\]
By construction, $\pi = s_0 s_1 s_2 \cdots$ is a path through $\sys$ and 
\[
w' = 
(\ell(s_0) \cup C_0)
(\ell(s_1) \cup C_1)
(\ell(s_2) \cup C_2)
\cdots
\]
is a coloring of its trace~$\trace(\pi)$. Also, $q_0 q_1 q_2 \cdots $ is an accepting run of $\aut$ on $w'$, i.e., $w' \models \neg\rel{\varphi} \wedge \chi$. Lastly, $w'$ is $(k+1)$-spaced by construction, as the cost-function of $\sys \times \aut$ is induced by the one of~$\sys$. 

Assume towards a contradiction that $\sys$ satisfies $\varphi$ with respect to $\alpha$, which implies $(\trace(\pi), \alpha) \models \varphi$. Applying Lemma~\ref{lemma_altcolor}.\ref{lemma_altcolor_pltl2ltl} yields that every $(k+1)$-spaced coloring of $\trace(\pi)$ satisfies $\rel{\varphi}$. However, $w'$ is a $(k+1)$-spaced coloring which satisfies $\neg \rel{\varphi}$, i.e., we have derived the desired contradiction.
\qed\end{proof}

Now, we are ready to prove Theorem~\ref{thm_mc}.

\begin{proof}
$\pspace$-hardness holds already for $\ltl$ model checking~\cite{SistlaClarke85}, which is a special case of $\pltlc$ model checking. Membership is witnessed by the following algorithm: check whether the colored Büchi graph~$\sys \times \aut$ has an initial pumpable fair path, which is correct due to Lemma~\ref{lemma_mccharacterization}. But as the graph is of exponential size, it has to be constructed and tested for non-emptiness on-the-fly.

Due to Lemma~\ref{lemma_upnonemptiness}, it suffices to check for the existence of an ultimately periodic path~$\pi_0\pi_1^\omega$ such that $\size{\pi_0\pi_1}\le n \in \mathcal{O}(\size{\sys \times \aut}^2)$, i.e., $n$ is  exponential in the size of $\varphi$ and quadratic in the size of $\sys$. To this end, one guesses a vertex $v$ (the first vertex of $\pi_1$) and checks the following reachability properties:
\begin{enumerate}
	\item\label{item_pi1} Is $v$ reachable from $v_I$ via a path where each block contains a cycle with non-zero cost?
	\item Is $v$ reachable from $v$ via a non-empty path that visits an accepting vertex and which either has no changepoint or where each block contains a cycle with non-zero cost? In this case, we also require that $v$ and the last vertex on the path from $v_I$ to $v$ guessed in item \ref{item_pi1}.) differ on their third component in order to make $v$ a changepoint. This spares us from having a block that spans $\pi_0$ and $\pi_1$. 
\end{enumerate}
All these reachability problems can be solved in non-deterministic polynomial space, as a successor of a vertex of $\sys \times \aut$ can be guessed and verified in polymonial time and the length of the paths to be guessed is bounded by $n$, which can be represented with  polynomially many bits. 
\qed\end{proof}

Furthermore, by applying both directions of the proof of Lemma~\ref{lemma_mccharacterization}, we obtain an exponential upper bound on the values of a satisfying variable valuation, if one exists. This is asymptotically tight, as one can already show exponential lower bounds for $\prompt$~\cite{KupfermanPitermanVardi09}.

\begin{corollary}
\label{cor_mcub}
Fix a transition system~$\sys$ and a $\pltlc$-formula~$\varphi$ such that $\sys$ satisfies $\varphi$ with respect to some $\alpha$. Then, $\sys$ satisfies $\varphi$ with respect to a valuation that is bounded exponentially in the size of $\varphi$ and linearly in the number of states of $\sys$ and in the maximal cost in $\sys$.
\end{corollary}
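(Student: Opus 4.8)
The plan is to chain together both directions of the equivalence established in Lemma~\ref{lemma_mccharacterization}, exploiting the fact that the forward direction of its proof only ever uses failure with respect to one explicitly named valuation. First I would pass to the $\pltlcf$ case exactly as in the $\pspace$-membership argument: given $\varphi$, form $\varphi'$ by recursively replacing each subformula $\G_{\le y}\psi$ with $\psi \vee \X(\neg\inc\U(\neg\inc\wedge\psi))$. Each replacement adds only a constant number of subformulas, so $\size{\varphi'} = \mathcal{O}(\size{\varphi})$, and by Lemma~\ref{lemma_monotonicity} and the discussion following it, $\sys$ satisfies $\varphi'$ with respect to a valuation $\beta$ on the variables in $\varF(\varphi)$ if and only if $\sys$ satisfies $\varphi$ with respect to the valuation that agrees with $\beta$ and sends every $y\in\varG(\varphi)$ to $0$. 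Hence it suffices to bound a satisfying valuation for the $\pltlcf$ formula $\varphi'$; the always-variables then contribute the harmless bound $0$.

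Now let $\aut$ be the Büchi automaton for $\neg\rel{\varphi'}\wedge\chi$ from Lemma~\ref{lemma_mccharacterization}, with $\size{Q}$ exponential in $\size{\varphi'}=\mathcal{O}(\size{\varphi})$, and set $k = (\size{S}\cdot\size{Q}+3)\cdot W$, where $W$ is the largest cost in $\sys$. Assuming $\sys$ satisfies $\varphi$, and therefore $\varphi'$, with respect to some valuation, I would invoke the converse direction of the proof of Lemma~\ref{lemma_mccharacterization} contrapositively: since an initial pumpable fair path of $\sys\times\aut$ forces failure with respect to every valuation, the existence of a satisfying valuation implies that $\sys\times\aut$ has no initial pumpable fair path.

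The key observation is that the forward direction of the same proof, though phrased as beginning from failure "with respect to any variable valuation", in fact uses only failure with respect to the single valuation $\alpha$ given by $\alpha(x)=2k$ for all $x$: from this one valuation it extracts a path $\pi$ with $(\trace(\pi),\alpha)\not\models\varphi'$ and builds an initial pumpable fair path. Reading this implication contrapositively, the absence of an initial pumpable fair path yields that $\sys$ satisfies $\varphi'$ with respect to precisely this $\alpha$ with $\alpha(x)=2k$. Translating back through the reduction of the first paragraph, $\sys$ satisfies $\varphi$ with respect to the valuation sending every $x\in\varF(\varphi)$ to $2k = 2(\size{S}\cdot\size{Q}+3)\cdot W$ and every $y\in\varG(\varphi)$ to $0$. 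This value is exponential in $\size{\varphi}$ through $\size{Q}$, linear in the number of states $\size{S}$, and linear in the maximal cost $W$, which is the claimed bound.

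I expect the main obstacle to be making rigorous the claim that the forward direction of Lemma~\ref{lemma_mccharacterization}'s proof delivers satisfaction with respect to the concrete valuation $\alpha(x)=2k$ rather than merely "some" valuation. This amounts to verifying that the hypothesis of that direction is consumed at exactly one step, namely "$\sys$ does not satisfy $\varphi'$ with respect to $\alpha$", so that its contrapositive is an honest statement about this one $\alpha$. The remaining work — the linear-size bound on the $\pltlcf$ reduction and the bookkeeping that the always-variables may be fixed to $0$ — is routine.
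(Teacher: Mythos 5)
Your proposal is correct and matches the paper's intended argument: the paper justifies Corollary~\ref{cor_mcub} precisely by \myquot{applying both directions of the proof of Lemma~\ref{lemma_mccharacterization}}, and your key observation --- that the forward direction of that proof consumes only failure with respect to the single valuation $\alpha(x)=2k$ with $k=(\size{S}\cdot\size{Q}+3)\cdot W$, so its contrapositive yields satisfaction with respect to that concrete valuation --- is exactly what that one-line justification relies on. Your additional bookkeeping (the linear-size reduction to $\pltlcf$ and fixing the $\G$-variables to $0$) mirrors the reduction the paper already performs in the $\pspace$-membership argument for Theorem~\ref{thm_mc}.
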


Dually, one can show the existence of an exponential variable valuation that witnesses whether a given $\pltlcg$ specification is satisfied with respect to every variable valuation. The following lemma states the contrapositive of this statement, which we prove using pumping arguments that are similar to the ones for the analogous results for $\pltlg$ and $\pldl_\G$~\cite{FaymonvilleZimmermann15}.

\begin{lemma}
\label{lemma_ubmcalways}
Fix a transition system~$\sys$ and a $\pltlcg$-formula~$\varphi$ such that $\sys$ does not satisfy $\varphi$ with respect to every $\alpha$. Then, $\sys$ does not satisfy $\varphi$ with respect to a valuation that is bounded exponentially in the size of $\varphi$ and linearly in the number of states of $\sys$ and in the maximal cost in $\sys$.
\end{lemma}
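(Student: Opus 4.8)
The plan is to dualize the construction underlying Lemma~\ref{lemma_mccharacterization} and Corollary~\ref{cor_mcub}. By Lemma~\ref{lemma_pltlnegation}, the negation~$\neg\varphi$ is a $\pltlcf$ formula with $\card{\neg\varphi}=\card{\varphi}$ and $\var(\neg\varphi)=\varG(\varphi)$, and $\sys$ does not satisfy $\varphi$ with respect to some~$\alpha$ if and only if there is a path~$\pi$ through $\sys$ with $(\trace(\pi),\alpha)\models\neg\varphi$. Hence it suffices to exhibit a path~$\pi'$ and a valuation~$\beta$ of the claimed magnitude with $(\trace(\pi'),\beta)\models\neg\varphi$, as this is precisely non-satisfaction of~$\varphi$ with respect to~$\beta$. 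To obtain a \emph{small} such~$\beta$ I would invoke part~\ref{lemma_altcolor_ltl2pltl} of Lemma~\ref{lemma_altcolor}: a $k$-bounded coloring satisfying $\rel{\neg\varphi}$ yields the valuation mapping every variable to~$2k$. Thus the whole task reduces to producing a path of~$\sys$ together with a $k$-bounded coloring of its trace satisfying $\rel{\neg\varphi}$, for some~$k$ that is exponential in $\card{\varphi}$ and linear in the number of states and the largest cost~$W$ of~$\sys$.

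For this I would reuse the colored Büchi graph with costs from the model-checking proof, but with the automaton~$\aut'$ now being a nondeterministic Büchi automaton for the (non-negated) relativized formula $\rel{\neg\varphi}\wedge\chi$, whose number of states is exponential in $\card{\varphi}$, and form the product~$\sys\times\aut'$ exactly as before. First I would show that the hypothesis yields an \emph{initial fair path} of $\sys\times\aut'$: starting from $(\trace(\pi),\alpha)\models\neg\varphi$, choose a $(k_0+1)$-spaced coloring~$w'$ of $\trace(\pi)$ that starts with~$p$ absent, where $k_0=\max_{x\in\var(\neg\varphi)}\alpha(x)$. Such a coloring exists by placing changepoints greedily whenever the accumulated cost exceeds~$k_0$, and it can be arranged to satisfy~$\chi$, since a greedily spaced coloring has infinitely many changepoints exactly when the trace has infinite cost. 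By Lemma~\ref{lemma_altcolor}.\ref{lemma_altcolor_pltl2ltl} we get $w'\models\rel{\neg\varphi}$, hence $w'\models\rel{\neg\varphi}\wedge\chi$; an accepting run of~$\aut'$ on~$w'$ together with~$\pi$ induces an initial fair path of~$\sys\times\aut'$.

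Conversely, I would extract a short witness by the standard lasso argument (as in Lemma~\ref{lemma_upnonemptiness}, but without the pumpability requirement, which is not needed here): an initial fair path can be taken to be ultimately periodic of the form~$\rho_0\rho_1^\omega$ with $\size{\rho_0\rho_1}$ \emph{linear} in the number of vertices of~$\sys\times\aut'$. The coloring read along such a path satisfies $\rel{\neg\varphi}\wedge\chi$ and, crucially, is $k$-bounded for $k=\size{\rho_0\rho_1}\cdot W$, because every block lies within one copy of~$\rho_0\rho_1$ and so has cost at most~$k$. The delicate point, which I expect to be the main obstacle, is the tail: a priori an ultimately periodic path could have a constantly colored period of non-zero cost, producing a tail of infinite cost and thereby violating $k$-boundedness. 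This is exactly ruled out by the conjunct~$\chi$, which forces infinitely many changepoints whenever the cost is infinite; hence either the period contains a changepoint (so there is no tail and every block has bounded cost) or the whole period has cost zero (so the tail has finite cost bounded by that of~$\rho_0$). In both cases the coloring is $k$-bounded with the stated~$k$.

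Applying Lemma~\ref{lemma_altcolor}.\ref{lemma_altcolor_ltl2pltl} to this coloring and to the path~$\pi'$ underlying~$\rho_0\rho_1^\omega$ then yields $(\trace(\pi'),\beta)\models\neg\varphi$ for~$\beta$ mapping every variable to~$2k$. Since $k=\size{\rho_0\rho_1}\cdot W$ is exponential in $\card{\varphi}$ (through the state count of~$\aut'$) and linear in the number of states of~$\sys$ and in~$W$, so is~$\beta$, and~$\beta$ already witnesses that~$\sys$ does not satisfy~$\varphi$ with respect to a valuation of the required magnitude. I would close with the remark that, by the monotonicity of Lemma~\ref{lemma_monotonicity} applied to the $\pltlcg$ formula~$\varphi$, non-satisfaction at~$\beta$ is preserved under increasing the values of the always-variables, so this bounded~$\beta$ serves as the canonical exponential witness announced before the statement.
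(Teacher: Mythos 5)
Your proof is correct and reaches the stated bound, but it executes the key compression step differently from the paper. Both arguments share the outer skeleton: dualize to the $\pltlcf$ formula~$\neg\varphi$ via Lemma~\ref{lemma_pltlnegation}, take a Büchi automaton~$\aut$ of exponential size for $\rel{\neg\varphi}\wedge\chi$, convert the hypothesis into a suitably spaced coloring satisfying this formula via Lemma~\ref{lemma_altcolor}.\ref{lemma_altcolor_pltl2ltl}, and close with Lemma~\ref{lemma_altcolor}.\ref{lemma_altcolor_ltl2pltl}. Where you route through the colored Büchi graph~$\sys\times\aut$, extract a short fair (not necessarily pumpable) lasso~$\rho_0\rho_1^\omega$, and use $\chi$ plus ultimate periodicity to establish boundedness, the paper never builds the product: it fixes an accepting run of~$\aut$ on the spaced coloring and shortens each block directly, removing infixes on which both the system state and the automaton state repeat, capping every block's cost at $(2\cdot\size{\aut}\cdot\size{\sys}+1)\cdot W$ (with $W$ the largest cost) by the same pigeonhole that underlies your lasso extraction. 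Your route buys two things: it reuses the machinery of Lemmas~\ref{lemma_upnonemptiness} and~\ref{lemma_mccharacterization} almost verbatim, and it handles the tail explicitly --- your case analysis (either the period contains a changepoint, or by~$\chi$ the period has cost zero) is exactly what $k$-boundedness of the tail requires, a point the paper's block-shortening argument passes over in silence. The paper's route, in turn, avoids the product construction and the seam bookkeeping. One nitpick on that bookkeeping: a block of the lasso's coloring need not lie within one copy of~$\rho_0\rho_1$ --- changepoints are only eventually periodic with period~$\size{\rho_1}$, so a block can straddle the seam between~$\rho_0$ and the periodic part, giving the bound~$2\cdot\size{\rho_0\rho_1}\cdot W$ rather than~$\size{\rho_0\rho_1}\cdot W$; this constant factor is harmless for the lemma's asymptotic statement.
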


\begin{proof}

Let $\aut$ be a Büchi automaton recognizing the models of $\rel{\neg \varphi} \wedge \chi$, which is of exponential size in $\size{\varphi}$. Define $k^* = (4 \cdot \size{\aut} \cdot \size{\sys}+2)\cdot W$, where $W$ is the largest cost in $\sys$, and let $\alpha^*$ be the variable valuation mapping every variable to $k^*$. We consider the contrapositive and show: if there is an $\alpha$ such that $\sys$ does not satisfy $\varphi$ with respect to $\alpha$, then $\sys$ does not satisfy $\varphi$ with respect to $\alpha^*$. 

Thus, assume there is an $\alpha$ and a path~$\pi$ such that $(\trace(\pi), \alpha) \models \neg \varphi$. Due to upwards-monotonicity we can assume w.l.o.g.\ that $\alpha$ maps all variables to the same value, call it $k$. 

Let $\trace(\pi)'$ be a $(k^*+W+1)$-bounded and $(k^*+1)$-spaced $p$-coloring of $\trace(\pi)$ that starts with $p$ not holding true in the first position, which can always be constructed as $W$ is the largest cost. Applying Lemma~\ref{lemma_altcolor}.\ref{lemma_altcolor_pltl2ltl} shows that $\trace(\pi)'$ satisfies $\rel{\neg \varphi}$. Furthermore, it satisfies $\chi$ by construction. Fix some accepting run of $\aut$ on $\trace(\pi)'$ and consider an arbitrary block of $\trace(\pi)'$: if the run does not visit an accepting state during the block, we can remove (if necessary) infixes of the block where the run reaches the same state before and after the infix and where the state of $\sys$ at the beginning and the end of the infix are the same, until the block has length at most $\size{\aut} \cdot \size{\sys}$ and thus cost at most $\size{\aut} \cdot \size{\sys} \cdot W$. 

On the other hand, assume the run visits at least one accepting state during the block. Fix one such position. Then, we can remove infixes as above between the beginning of the block and the position before the accepting state is visited and between the position after the accepting state is reached and before the end of the block. What remains is a block whose length is at most $2\cdot\size{\aut}\cdot\size{\sys}+1$, at it has most $\size{\aut}\cdot \size{\sys}$ many positions before the designated position, this position itself, and at most $\size{\aut}\cdot \size{\sys}$ many after the designated position. Hence, the block has cost at most $(2\cdot\size{\aut}\cdot\size{\sys}+1) \cdot W$.

Thus, we have constructed a $(2 \cdot\size{\aut}\cdot \size{\sys}+1)\cdot W$-bounded $p$-coloring~$\trace(\hat{\pi})'$ of a trace~$\trace(\hat{\pi})$ for some path~$\hat{\pi}$ of $\sys$, as well as an accepting run of $\aut$ on $\trace(\hat{\pi})'$. Hence, $\trace(\hat{\pi})'$ is a model of $\rel{\neg \varphi}$ and applying Lemma~\ref{lemma_altcolor}.\ref{lemma_altcolor_ltl2pltl} shows that $\trace(\hat{\pi})$ is a model of $\neg \varphi$ with respect to the variable valuation mapping every variable to $2\cdot(2\cdot\size{\aut}\cdot \size{\sys}+1) \cdot W=k^*$. Therefore, $\sys$ does not satisfy $\varphi$ with respect to $\alpha^*$.
\qed
\end{proof}

\section{Infinite Games}
\label{sec_games}
An arena~$\arena = (V, V_0, V_1, v_I, E, \ell, \cst)$ consists of a finite directed graph~$(V, E)$, a partition~$(V_0, V_1)$ of $V$, an initial vertex~$v_I \in V$, a labeling~$\ell \colon V \rightarrow \pow{P}$, and a cost function~$\cst \colon E \rightarrow \nats$. Again, we encode the weights in binary, although the algorithms  we present here and their running times and space requirements  are oblivious to the exact weights. Also, we again assume that every vertex has at least one successor to avoid dealing with finite paths. Also, we again ensure our requirement on the proposition~$\inc$ to indicate the sign of the costs in a cost-trace: if $\inc \in \ell(v')$, then we require $\cst(v,v') >0$ for every edge~$(v,v') \in E$ leading to $v'$. Dually, if $\inc \notin \ell(v')$, then $\cst(v,v') = 0$ for every edge~$(v,v') \in E$.

A play~$\rho = \rho_0 \rho_1 \rho_2 \cdots$ is a path through $\arena$ starting in $v_I$ and its cost-trace~$\trace(\rho)$ is defined as
\[
\trace(\rho)  = 
\ell(\rho_0)\, \cst(\rho_0, \rho_1)\,
\ell(\rho_1)\, \cst(\rho_1, \rho_2)\,
\ell(\rho_2)\, \cst(\rho_2, \rho_3)
\cdots.\]

 A strategy for Player~$i \in \set{0,1}$ is a mapping~$\sigma \colon V^*V_i \rightarrow V$ with $(v, \sigma(wv)) \in E$ for every $w \in V^*$ and $v \in V_i$. A play~$\rho$ is consistent with $\sigma$ if $\rho_{n+1} = \sigma (\rho_0 \cdots \rho_n)$ for every $n$ with $\rho_n \in V_i$.

A $\pltlc$ game~$\game = (\arena, \varphi)$ consists of an arena~$\arena$ and a winning condition~$\varphi$, which is a $\pltlc$ formula. A strategy~$\sigma$ for Player~$0$ is winning with respect to some variable valuation $\alpha$, if the trace of every play that is consistent with $\sigma$ satisfies the winning condition~$\varphi$ with respect to $\alpha$.

We are interested in determining whether Player~$0$ has a winning strategy for a given $\pltlc$~game, and in determining a winning strategy for her if this is the case, which we refer to as solving the game.

\begin{theorem}
\label{thm_games}
Determining whether Player~$0$ has a winning strategy in a given $\pltlc$ game is $\twoexp$-complete. Furthermore, a winning strategy (if one exists) can be computed in doubly-exponential time. 
\end{theorem}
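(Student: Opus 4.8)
The plan is to mirror the model-checking proof of Theorem~\ref{thm_mc}, replacing the search for a pumpable fair path by the solution of an $\ltl$ game obtained through the alternating-color technique. The lower bound is immediate: $\ltl$ games are $\twoexp$-complete and arise as the special case of $\pltlc$ games whose winning condition is variable-free, so it remains to prove membership in $\twoexp$ together with the stated bound on strategy computation.

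For the upper bound I would first reduce to a $\pltlcf$ winning condition. Since Player~$0$ aims to satisfy $\varphi$, Lemma~\ref{lemma_monotonicity} shows that decreasing the values of the always-variables only helps her; hence, exactly as in the model-checking argument, I replace every subformula~$\G_{\le y}\psi$ by $\psi \vee \X(\neg\inc\U(\neg\inc\wedge\psi))$ without changing whether some $\alpha$ makes Player~$0$ win. Next I build the colored $\ltl$ game: the arena~$\arena$ is augmented so that Player~$0$ additionally chooses, in each step, the truth value of the coloring proposition~$p$, and the new, variable-free winning condition is $\rel{\varphi}\wedge\chi$. I then determinize this condition into a deterministic parity automaton~$\aut$ and take its product with the augmented arena, yielding a parity game that I solve with a standard algorithm; a positional winning strategy there induces a finite-memory winning strategy on~$\arena$.

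The heart of the proof is the equivalence \emph{Player~$0$ wins the $\pltlc$ game for some $\alpha$ if and only if she wins the colored $\ltl$ game}, which plays the role of Lemma~\ref{lemma_mccharacterization}. For the forward direction I take a winning strategy for the $\pltlc$ game with bound $k=\max_x\alpha(x)$ and let Player~$0$ superimpose a $(k+1)$-spaced coloring online (keeping the color constant until the accumulated cost of the current block exceeds $k$); Lemma~\ref{lemma_altcolor}.\ref{lemma_altcolor_pltl2ltl} then guarantees that every consistent play satisfies $\rel{\varphi}$, while $\chi$ holds because on infinite-cost plays the block spacing forces infinitely many changepoints and on finite-cost plays only finitely many. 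The converse is the delicate part and is where the cost setting bites: from a \emph{finite-memory} winning strategy~$\sigma$ for the colored game I must extract a single bound~$k$ such that all plays consistent with $\sigma$ are $k$-bounded, so that Lemma~\ref{lemma_altcolor}.\ref{lemma_altcolor_ltl2pltl} yields the uniform valuation $\alpha(x)=2k$. The key observation is a cost-pumping argument analogous to the pumpable-path analysis: if some consistent play had a block containing more than $\size{\arena}\cdot\size{\aut}$ edges of non-zero cost, a product configuration would repeat across a non-zero-cost cycle inside a single constant-color block, and since $\sigma$ is determined by the product configuration Player~$1$ could replay his choices to loop there forever, producing a play of infinite cost with only finitely many changepoints and thereby violating~$\chi$. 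As $\sigma$ is winning this cannot happen, so every block has at most $\size{\arena}\cdot\size{\aut}$ non-zero-cost edges and hence cost at most $k$ for $k$ linear in $\size{\arena}\cdot\size{\aut}$ and in the maximal cost~$W$; plays of finite total cost require separate but routine care using the tail clause of Lemma~\ref{lemma_altcolor}.

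For the complexity bound, note that $\rel{\varphi}\wedge\chi$ has size linear in~$\size{\varphi}$, so its deterministic parity automaton~$\aut$ is of doubly-exponential size with exponentially many colors; the product parity game is then doubly-exponential and solvable in doubly-exponential time, from which a doubly-exponential finite-memory winning strategy on~$\arena$ is read off. I expect the main obstacle to be the converse direction of the characterization: making precise that finite-memory winning forces cost-bounded blocks—including the treatment of tails and of plays with finite total cost—is exactly the place where bounding cost rather than elapsed time requires the non-zero-cost cycle condition, in contrast to the plain repetition argument that suffices for $\prompt$.
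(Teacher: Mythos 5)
Your proposal is correct and follows essentially the same route as the paper: reduce to a $\pltlcf$ winning condition via monotonicity, let Player~$0$ choose the coloring in an augmented arena with winning condition $\rel{\varphi}\wedge\chi$, prove the characterization lemma with a spaced-coloring strategy in one direction and a finite-memory pumping argument (non-zero-cost cycle within a constant-color block contradicting $\chi$) in the other, and solve the resulting doubly-exponential game. The paper packages the augmented arena via explicit choice vertices and \emph{blinking semantics} and cites the known $\ltl$-game algorithm rather than spelling out the parity-automaton product, but these are presentational differences, not mathematical ones.
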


Our proof technique is a generalization of the one for infinite games with $\pltl$ winning conditions~\cite{Zimmermann13}, which in turn extended Kupferman et al.'s solution for the $\prompt$ realizability problem~\cite{KupfermanPitermanVardi09}. First, we note that it is again sufficient to consider $\pltlcf$ formulas, as we are interested in the existence of a variable valuation (see the discussion below Lemma~\ref{lemma_monotonicity}). Next, we apply the alternating-color technique: to this end, we modify the arena to allow Player~$0$ to produce colorings of plays of the original arena and use the relativized winning condition, i.e., we reduce the problem to a game with $\ltl$ winning condition. The winner (and a winning strategy) of such a game can be computed in doubly-exponential time~\cite{PnueliRosner89,PnueliRosner89a}. 

To allow for the coloring, we double the vertices of the arena, additionally label one copy with $p$ and the other not, and split every move into two: first, the player whose turn it is picks an outgoing edge, then Player~$0$ decides in which copy she wants to visit the target, thereby picking the truth value of $p$.

Formally, given an arena $\arena=(V,V_0,V_1, v_I, E,\ell, \cst)$, the extended arena $\arena' =
(V',V_0',V_1', v_I', E',\ell', \cst')$ consists of 
\begin{itemize}
\item $V'=V\times\{0,1\}\cup E$,
\item $V_0'=V_0\times\{0,1\}\cup E$ and $V_1'=V_1\times\{0,1\}$,
\item $v_I' = (v_I, 0)$,
\item $E'=\{((v,0),e),((v,1),e),(e,(v',0)),(e,(v',1))\mid e=(v,v')\in E\}$,
\item $\ell'(e)=\emptyset$ for every $e\in E$ and
$\ell'(v,b)=\begin{cases}\ell(v) & \text{if $b=0$},\\
{\ell(v)\cup\{p\}}&\text{if $b=1$},\\
         \end{cases}$ and
\item $\cst'((v,b),(v,v')) = \cst(v,v') $ and $\cst'((v,v'),(v',b'))=0$.
\end{itemize}

A path through the new arena~$\arena'$ has the
form $(\rho_0,b_0)e_0(\rho_1,b_1)e_1(\rho_2,b_2)\cdots$ for some path~$\rho_0\rho_1\rho_2\cdots$ through $\arena$, where  $e_n=(\rho_n,\rho_{n+1})$ and $b_n \in \{0,1\}$. Also, we have
$\card{\arena'}\in\mathcal{O}(\card{\arena}^2)$. Note that we use the costs in $\arena'$ only to argue the correctness of our construction, not to define the winning condition for the game in $\arena'$. 
 
Also, note that the additional choice vertices of the form $e \in E$ have to be ignored when it comes to evaluating the winning condition on the trace of a play. Thus, we consider games with $\ltl$ winning conditions under so-called \emph{blinking semantics}: Player~$0$ wins a play~$\rho = \rho_0 \rho_1 \rho_2 \cdots$ under blinking semantics, if $\ell(\rho_0)\ell(\rho_2)\ell(\rho_4) \cdots$ satisfies the winning condition~$\varphi$; otherwise, Player~$1$ wins. Winning strategies under blinking semantics are defined as expected. Determining whether Player~$0$ has a winning strategy for a given game with $\ltl$ winning condition under blinking semantics is $\twoexp$-complete, which can be shown by a slight variation of the proof for $\ltl$ games under classical semantics~\cite{PnueliRosner89,PnueliRosner89a}. Furthermore, if Player~$0$ has a winning strategy for such a game, then also a finite-state one of at most doubly-exponential size in $\size{\varphi}$. 

Such a strategy for an arena $(V,V_0,V_1, v_I, E,\ell, \cst)$ is given by a memory structure~$\mem = (M, m_I, \update)$ with a finite set~$M$ of memory states, an initial memory state~$m_I \in M$, and an update function~$\update\colon M \times V \rightarrow M$, and by a next-move function~$\nextmove \colon V_0 \times M \rightarrow V$ satisfying $(v, \nextmove(v,m)) \in E$ for every $m$ and every $v$. The function~$\update^* \colon V^+ \rightarrow M$ is defined via $\update^*(v) = m_I$ and $\update^*(wv) = \update(\update^*(w),v)$. Then, the strategy~$\sigma$ implemented by $\mem$ and $\nextmove$ is defined by $\sigma(wv) = \nextmove(v, \update^*(wv))$. The size of $\sigma$ is (slightly abusively) defined as $\size{M}$.

Given a game~$(\arena, \varphi)$ with $\pltlcf$ winning condition~$\varphi$, define $\arena'$ as above and let $\varphi' = \rel{\varphi} \wedge \chi$, where 
$\chi = (\G\F p \wedge \G\F \neg p) \leftrightarrow \G\F \inc$. 
Recall that $\chi$ is satisfied by a cost-trace, if the trace has infinitely many changepoints if and only if it has cost~$\infty$.
 
\begin{lemma}
\label{lemma_gamescharacterization}[cp.\ Lemma 3.1 of \cite{KupfermanPitermanVardi09}]
Player~$0$ has a winning strategy for $(\arena, \varphi)$ with respect to some $\alpha$ if and only if she has a winning strategy for $(
\arena',\varphi')$ under blinking semantics.
\end{lemma}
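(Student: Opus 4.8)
The plan is to prove both directions of the equivalence by translating strategies between the two games, using Lemma~\ref{lemma_altcolor} to bridge between the cost-bounded semantics of $\pltlc$ and the relativized $\ltl$ formula $\rel{\varphi}$. Throughout I will exploit the fact that a play $(\rho_0, b_0) e_0 (\rho_1, b_1) e_1 \cdots$ of $\arena'$ projects to the play $\rho_0 \rho_1 \cdots$ of $\arena$ together with a $p$-coloring (determined by the bits $b_n$) of its trace, and conversely every play of $\arena$ together with a choice of coloring yields a play of $\arena'$. Under blinking semantics the winning condition $\varphi' = \rel{\varphi} \wedge \chi$ is evaluated exactly on this colored trace, the choice vertices $e \in E$ being ignored.

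First I would treat the direction from $(\arena, \varphi)$ to $(\arena', \varphi')$. Assume Player~$0$ has a strategy $\sigma$ winning with respect to some $\alpha$; by upwards-monotonicity (Lemma~\ref{lemma_monotonicity}) I may assume $\alpha$ maps every variable to a common value $k$. The idea is to have Player~$0$ in $\arena'$ simulate $\sigma$ for the edge-choices while using her coloring choices (the second half of each move) to produce a $(k+1)$-spaced coloring of the simulated play. This is always possible because she controls every bit $b_n$, and she can place a changepoint only after accumulating cost at least $k+1$ in the current block; because $\chi$ must also be enforced, she must additionally ensure the coloring has infinitely many changepoints exactly when the play has infinite cost, which she can arrange by forcing a changepoint whenever enough cost has accumulated. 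Any resulting play projects to a play consistent with $\sigma$, hence its uncolored trace satisfies $\varphi$ with respect to $\alpha$, so by Lemma~\ref{lemma_altcolor}.\ref{lemma_altcolor_pltl2ltl} the $(k+1)$-spaced colored trace satisfies $\rel{\varphi}$; together with $\chi$ this gives $\varphi'$, so the simulating strategy wins under blinking semantics.

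For the converse, assume Player~$0$ has a finite-state strategy winning $(\arena', \varphi')$ under blinking semantics, of at most doubly-exponential size. Set $k$ to be a suitable value, polynomial in the memory size and the number of arena vertices and linear in the largest cost $W$, and I would extract from the finite-state strategy a strategy $\sigma$ for $\arena$ together with a guarantee that the colorings it produces are $k$-bounded. The key point is that a finite-state winning strategy, combined with the arena, yields finitely many reachable memory-vertex configurations, so along any play consistent with it each block can accumulate cost at most $k$ before a configuration repeats (otherwise one could pump a zero-changepoint, infinite-cost-free cycle violating $\chi$, or delay a changepoint indefinitely). Hence every play consistent with $\sigma$ has a $k$-bounded coloring satisfying $\rel{\varphi}$, and Lemma~\ref{lemma_altcolor}.\ref{lemma_altcolor_ltl2pltl} yields that its uncolored trace satisfies $\varphi$ with respect to the valuation mapping every variable to $2k$; thus $\sigma$ is winning with respect to this $\alpha$.

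The main obstacle is the converse direction, specifically the boundedness argument on the colorings produced by the winning strategy: unlike the time-bounded $\prompt$ setting, cost need not increase at every step, so a block may be long yet cheap, and an infinite play may have finite cost. The formula $\chi$ is precisely what rules out the pathological behavior—it forces the coloring to have infinitely many changepoints iff the play has infinite cost—and the delicate step is to argue, via a pumping argument on the product of the finite-state strategy with the arena, that $\chi$ together with finite memory caps the cost of every block by a value $k$ of the stated magnitude. This is where the intricacies flagged in the surrounding text (infinite paths with finite cost, cycles of zero cost) must be handled, mirroring the reasoning already carried out in the proof of Lemma~\ref{lemma_mccharacterization}.
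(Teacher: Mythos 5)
Your proposal is correct and follows essentially the same route as the paper's proof: the forward direction simulates $\sigma$ while placing a changepoint each time cost at least $k+1$ accumulates (yielding a $(k+1)$-spaced coloring satisfying $\chi$, then applying Lemma~\ref{lemma_altcolor}.\ref{lemma_altcolor_pltl2ltl}), and the converse extracts a strategy for $\arena$ from a doubly-exponential finite-state strategy for $\arena'$ and establishes $k$-boundedness of the induced colorings by pumping a positive-cost, changepoint-free cycle of vertex--memory configurations to contradict $\chi$, before applying Lemma~\ref{lemma_altcolor}.\ref{lemma_altcolor_ltl2pltl} with $\alpha(x)=2k$. The only differences are matters of detail (the explicit memory structure and the exact constant $k = (\card{V}\cdot\card{M}+3)\cdot W$), not of method.
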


\begin{proof}
Let $\sigma$ be a winning strategy for Player~$0$ in $(\arena, \varphi)$ with respect to some fixed $\alpha$ and define $k = \max_{x \in \var(\varphi)}\alpha(x)$.
We define a strategy~$\sigma'$ for $\arena'$ as follows:
\begin{equation*}
\sigma'((\rho_0,b_0)(\rho_0, \rho_1) \cdots (\rho_{n-1}, \rho_n)
(\rho_n,b_n))=(\rho_n,\sigma(\rho_0\cdots\rho_n))
\end{equation*}
if $(\rho_n, b_n) \in V_0'$, which implies $\rho_n \in V_0$. Thus, at a 
non-choice vertex, Player~$0$ mimics the behavior of $\sigma$. At choice 
vertices, she alternates between the two copies of the arena every time  the cost has exceeded $k+1$: let \[w = (\rho_0,b_0)(\rho_0, \rho_1) \cdots
(\rho_n,b_n) (\rho_n, \rho_{n+1})\] be a play prefix ending in a choice vertex and let $n' \le n$ be the last changepoint in $\ell'(\rho_0, b_0) \cdots \ell'(\rho_n, b_n)$. Now, we define
\begin{equation*}
\sigma'(w)=\begin{cases}
(\rho_{n+1},0) & \text{if ($\cst(\rho_{n'} \cdots \rho_n) < k+1$ and $b_n = 0$) or}\\
&\text{($\cst(\rho_{n'} \cdots \rho_n) \ge k+1$ and $b_n = 1$),}\\
(\rho_{n+1},1) & \text{if ($\cst(\rho_{n'} \cdots \rho_n) < k+1$ and $b_n = 1$) or}\\
&\text{($\cst(\rho_{n'} \cdots \rho_n) \ge k+1$ and $b_n = 0$).}
\end{cases}
\end{equation*}

Let $\rho=\rho_0\rho_1\rho_2\cdots$ be a play in $\arena'$ that is consistent
with $\sigma'$ and let
\begin{equation*}
\rho'=\rho_0\rho_2\rho_4\cdots=(v_0,b_0)(v_1,b_1)(v_2,b_2)\cdots.
\end{equation*}
By definition of $\sigma'$, the sequence~$v_0v_1v_2\cdots$ is a play in $\arena$
that is consistent with $\sigma$ and thus winning for Player~$0$ with respect to
$\alpha$, i.e.,  $(\trace(v_0v_1v_2\cdots), \alpha)\models\varphi$.
Also, $w' = \ell'(v_0, b_0)\ell'(v_1, b_1)\ell'(v_2, b_2) \cdots$ is a $(k+1)$-spaced coloring of the trace
$\trace(v_0v_1v_2\cdots)$. Hence, $w'\models \varphi'$ due to
Lemma~\ref{lemma_altcolor}.\ref{lemma_altcolor_pltl2ltl}. Finally, $w'$ satisfies $\chi$ by construction. Thus,
$\sigma'$ is a winning strategy for $(\arena', \varphi')$
under blinking semantics.

Now, let $\sigma'$ be a winning strategy for Player~$0$ in $(\arena', \varphi')$ which we can assume (w.l.o.g.) to be implemented by $\mem'=(M',m_I',\update')$ and some next-move function~$\nextmove'$
 such that $\size{M'}$ is doubly-exponential in $\size{\varphi}$. We define a strategy~$\sigma$ for $\arena$ by simulating a play in $\arena'$ that is consistent with $\sigma'$.

To this end, define the memory structure~$\mem = (M,m_I,\update)$ for $\arena$ with $M = (V\times\{0,1\})\times
M'$, $m_I= ((v,0),m_I')$, and
\begin{equation*}
\update(((v,b),m),v') = (\nextmove'(e, m'),\update'(m',\nextmove'(e,m')))
\end{equation*}
where $e = (v, v')$ and $m'= \update'(m,e)$. Intuitively, the update-function mimics two moves in $\arena'$: first, the one from $(v,b)$ to $e = (v,v')$ and then the move from this choice vertex determined by the strategy $\sigma'$, which is given by $\nextmove'(e,m')$, where $m'$ is the updated memory state.

Let $w$ be a play prefix of a play in $\arena$. The memory state~$\update^* (w)
= ((v,b),m)$ encodes the following information: the simulated play~$w'$ in
$\arena'$ ends in $(v, b)$, where $v$ is the last vertex of $w$, and we have
$\update'^*(w') = m$. Hence, it contains all information necessary to apply the
next-move function $\nextmove'$ to mimic $\sigma'$. Thus, we define the
next-move function~$\nextmove \colon V_0\times M\rightarrow V$ for Player~$0$ in
$\arena$ by
\begin{equation*}
\nextmove(v,((v',b),m))=
  \begin{cases}
                       v''&\text{if $v= v'$ and }
\nextmove'((v',b),m)=(v',v''),\\
                       \overline{v} &\text{otherwise, for some $\overline{v} \in
V$ with $(v,\overline{v})\in E$. }
                       \end{cases}
\end{equation*}
By definition of $\mem$, the second case of the definition is never invoked,
since $\update^*(wv) = ((v',b),m)$ always satisfies $v = v'$.

It remains to show that the strategy~$\sigma$ implemented by $\mem$ and
$\nextmove$ is indeed a winning strategy for Player~$0$ for $(\arena,\varphi)$
with
respect to some $\alpha$.  To this end, let $k = (\card{V}\cdot\card{M}+3)\cdot W$ and define $\alpha(x)=2k$ for every $x$, where $W$ is the largest weight in $\arena$. 

Let $\rho_0\rho_1\rho_2\cdots$ be a play in $\arena$ that is consistent with
$\sigma$. A straightforward induction shows that there exist
bits~$b_0,b_1,b_2,\cdots$ such that the play
\[(\rho_0,b_0)(\rho_0,\rho_1)(\rho_1,b_1)(\rho_1,\rho_2)(\rho_2, b_2)\cdots\] in
$\arena'$ is consistent with $\sigma'$.
 Hence, $w'' = \ell'(\rho_0,b_0)\ell'(\rho_1,b_1)\ell'(\rho_2,b_2)\cdots$
satisfies $\varphi'$. We show that
$w''$ is $k$-bounded. This
suffices to finish the proof as we can  apply
Lemma~\ref{lemma_altcolor}.\ref{lemma_altcolor_ltl2pltl} and
obtain $(\trace(\rho),\alpha)\models\varphi$, as $w''$ is a $k$-bounded
coloring of $\trace(\rho)$. Thus, $\sigma$ is  a winning strategy for
Player~$0$ for $(\arena, \varphi)$ with respect to $\alpha$.

Towards a contradiction assume that $w''$ is not $k$-bounded. Then,
there exist positions~$i<j$ such that
\begin{itemize}
	\item $\rho_{i}=\rho_{j}$,
	\item $\update'^*((\rho_0,b_0)\cdots
(\rho_{i},b_{i}))=\update'^*((\rho_0,b_0)\cdots
(\rho_{j},b_{j}))$,
	\item the bits~$b_i, \ldots, b_j$ are all equal, and
	\item $\cst(\rho_i \cdots \rho_j) > 0$.
\end{itemize}
To show this, one defines the sets~$V_j$ of vertices visited between the  $j$-th and the $(j+1)$-th edge with non-zero cost (including the $j$-th edge). This yields $\size{V}\cdot\size{M}+1$ non-empty sets of vertices of $(V\times\set{0,1}) \times M$ that coincide on the bit stored in their second component. Hence, we have derived the desired vertex repetition, as there are only $\size{V}\cdot\size{M}$ such vertices. 

Thus, the play \[\rho^* =
(\rho_0,b_0)\cdots
(\rho_{i-1},b_{i-1})\big[(\rho_{i},b_{i})\cdots(\rho_{j-1},b_{j-1}
)(\rho_{j-1},\rho_{j})\big]^{\omega},\] obtained by traversing the
cycle between $(\rho_{i},b_{i})$ and $(\rho_{j},b_{j})$ infinitely often, is
consistent with $\sigma'$, since the memory states reached at the beginning and
the end of the loop are the same. Remember that the bits do not change between
$i$ and $j$. Thus, $\trace(\rho^*)$ has only finitely many change points, but infinitely many occurrences of $\inc$ and
does therefore not satisfy $\chi$ under blinking semantics. This contradicts the fact
that $\sigma'$ is a winning strategy for $(\arena', \rel{\varphi} \wedge
\chi)$ under blinking semantics.
\qed\end{proof}

Now, we are able to prove Theorem~\ref{thm_games}.

\begin{proof}
Hardness follows immediately from the $\twoexp$-hardness of determining the winner of an $\ltl$ game~\cite{PnueliRosner89,PnueliRosner89a}, which is a special case.

Membership in $\twoexp$ follows from the reductions described above: first, we turn the winning condition into a $\pltlcf$ formula and construct the $\ltl$ game under blinking semantics obtained from expanding the arena and relativizing the winning condition. This game is only polynomially larger than the original one and its winner (and a winning strategy) is computable in doubly-exponential~time. 
\qed\end{proof}

By applying both directions of the proof of Lemma~\ref{lemma_gamescharacterization}, we obtain a doubly-exponential upper bound on the values of a satisfying variable valuation, if one exists. This is asymptotically tight, as one can already show doubly-exponential lower bounds for $\prompt$~\cite{Zimmermann13}.

\begin{corollary}
\label{cor_gamesub}
Fix a $\pltlc$ game~$\game = (\arena, \varphi)$ such that Player~$0$ has a winning strategy for $\game$ with respect to some $\alpha$. Then, Player~$0$ has a winning strategy for $\game$ with respect to a valuation that is bounded doubly-exponentially in the size of $\varphi$ and linearly in the number of vertices of $\arena$ and in the maximal cost in $\arena$.
\end{corollary}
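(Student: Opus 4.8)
The plan is to read off the witnessing valuation directly from the two implications of Lemma~\ref{lemma_gamescharacterization}, exactly as the paragraph preceding the corollary suggests; no new machinery is required. First I would assume Player~$0$ has a winning strategy for $\game = (\arena, \varphi)$ with respect to some~$\alpha$. As in the proof of Theorem~\ref{thm_games}, I may assume that $\varphi$ is a $\pltlcf$ formula, since replacing every parameterized always operator by $\psi \vee \X(\neg\inc\U(\neg\inc\wedge\psi))$ costs nothing by the discussion below Lemma~\ref{lemma_monotonicity}. Then the forward direction of Lemma~\ref{lemma_gamescharacterization} yields a winning strategy for Player~$0$ in the blinking-semantics $\ltl$ game~$(\arena', \varphi')$, where $\varphi' = \rel{\varphi} \wedge \chi$, the extended arena satisfies $\card{\arena'} \in \mathcal{O}(\card{\arena}^2)$, and $\size{\varphi'} \in \mathcal{O}(\size{\varphi})$.

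Next I would invoke the finite-memory determinacy of $\ltl$ games under blinking semantics stated just before Lemma~\ref{lemma_gamescharacterization}: whenever Player~$0$ wins such a game she also has a finite-state winning strategy~$\sigma'$, implemented by a memory structure~$\mem' = (M', m_I', \update')$ and a next-move function, whose memory size~$\card{M'}$ is at most doubly-exponential in $\size{\varphi'}$, and hence in $\size{\varphi}$; crucially, $\card{M'}$ is independent of $\arena$. Feeding this \emph{finite-state}~$\sigma'$ into the backward direction of Lemma~\ref{lemma_gamescharacterization} produces a winning strategy~$\sigma$ for Player~$0$ in $(\arena, \varphi)$ with respect to the valuation that that proof constructs explicitly, namely $\alpha(x) = 2k$ for every~$x$, where
\[
k = (\card{V} \cdot \card{M} + 3) \cdot W,
\]
$W$ is the largest cost in $\arena$, and $M$ is the composite memory used there.

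It then remains only to estimate~$\card{M}$. The proof of Lemma~\ref{lemma_gamescharacterization} takes $M = (V \times \set{0,1}) \times M'$, but the $V$-component merely records the current vertex, which is always supplied as the argument of the next-move function and is therefore redundant; one may equally well use $M = \set{0,1} \times M'$, giving $\card{M} \in \mathcal{O}(\card{M'})$. Substituting this into the expression for~$k$ shows that $\alpha(x) = 2k$ is doubly-exponential in $\size{\varphi}$ (through the factor~$\card{M'}$), linear in the number~$\card{V}$ of vertices of $\arena$, and linear in~$W$. This is precisely the asserted bound, so the valuation witnessing the claim is produced constructively and no further computation is needed.

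The main obstacle is conceptual rather than computational: the backward direction of Lemma~\ref{lemma_gamescharacterization} bounds the valuation only because it is fed a \emph{finite-state} strategy, and the size of its memory governs~$k$. Thus the entire argument hinges on the finite-memory determinacy of blinking-semantics $\ltl$ games together with the doubly-exponential bound on the required memory, the very ingredient that drives the $\twoexp$ upper bound of Theorem~\ref{thm_games}. Once that is in hand, the doubly-exponential dependence on $\size{\varphi}$ and the linear dependence on $\card{\arena}$ and $W$ follow by the arithmetic above; asymptotic tightness of the doubly-exponential factor is inherited from the known lower bound for $\prompt$ and need not be reproved here.
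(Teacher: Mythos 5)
Your proposal is correct and follows the paper's proof exactly: the paper's argument for this corollary is literally the one-liner you give, namely run the forward direction of Lemma~\ref{lemma_gamescharacterization}, invoke the doubly-exponential finite-state bound on winning strategies for $\ltl$ games under blinking semantics, and feed that strategy into the backward direction, which outputs the valuation $\alpha(x)=2k$ with $k=(\card{V}\cdot\card{M}+3)\cdot W$. You also rightly notice the one subtle accounting point: taken at face value with $\mem=(V\times\set{0,1})\times M'$, i.e.\ $\card{M}=2\card{V}\cdot\card{M'}$, this $k$ would be \emph{quadratic} in $\card{V}$, not linear. Your repair, however, is not literally available in the paper's formalism: the $V$-component of the memory is not redundant, because the update function $\update\colon M\times V\rightarrow M$ receives only the \emph{new} vertex $v'$, so the stored previous vertex $v$ is needed to identify the traversed edge $e=(v,v')$ when simulating $\sigma'$. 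The correct (and equally easy) repair is to count only \emph{reachable} configurations in the pumping argument: since the memory's vertex component always equals the current play vertex, within a block there are only $\card{V}\cdot\card{M'}$ distinct (vertex, bit, $M'$-state) configurations, which is exactly the count the paper's own repetition argument uses, and this restores the claimed linear dependence on $\card{V}$ and $W$ together with the doubly-exponential dependence on $\size{\varphi}$.
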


\section{Parametric LDL with Costs}
\label{sec_ldl}
Linear Dynamic logic ($\ldl$)~\cite{GiacomoVardi13,Vardi11} extends $\ltl$ by temporal operators guarded with regular expressions, e.g., $\ddiamond{r}\varphi$ holds at position~$n$, if there is a $j$ such that $\varphi$ holds at position~$n+j$ and the infix between positions~$n$ and $n+j-1$ matches $r$. The resulting logic has the full expressiveness of the $\omega$-regular languages while retaining many of $\ltl$'s desirable properties like a simple syntax, intuitive semantics, a polynomial space algorithm for model checking, and a doubly-exponential time algorithm for solving games. Parametric $\ldl$ ($\pldl$)~\cite{FaymonvilleZimmermann14} allows to parameterize such operators, i.e., $\ddiamondle{r}{x}\varphi$ holds at position~$n$ with respect to a variable valuation~$\alpha$, if there is a $j \le \alpha(x)$ such that $\varphi$ holds at position~$n+j$ and the infix between positions~$n$ and $n+j-1$ matches $r$. Model checking and solving games with $\pldl$ specifications is not harder than for $\ltl$, although $\pldl$ is more expressive and has parameterized operators. In this section, we consider $\pldlc$ where the parameters bound the cost of the infix instead of the length. 

Formally, formulas of $\pldlc$ are given by the grammar
\begin{align*}
\varphi &\cceq p \mid \neg p \mid \varphi \wedge \varphi \mid \varphi \vee \varphi
  \mid \ddiamond{r} \varphi 
  \mid \bbox{r} \varphi 
  \mid \ddiamondle{r}{z} \varphi 
  \mid \bboxle{r}{z} \varphi\\
  r & \cceq \phi \mid \varphi? \mid r+r \mid r \conc r \mid r^*
\end{align*}
where $p \in P$, $z \in \Var$, and where $\phi$ ranges over propositional formulas over $P$. As for $\pltlc$, $\pldlc$ formulas are evaluated on cost-traces with respect to variable valuations. Satisfaction of atomic formulas and of conjunctions and disjunctions is defined as usual, and for the four temporal operators, we define
\begin{itemize}

\item $(w, n, \alpha) \models \ddiamond{r}\varphi$ if there exists $j \ge 0$ such that $(n, n+j) \in \Rexp(r, w, \alpha)$ and $(w, n+j, \alpha) \models \varphi$, 

\item $(w, n, \alpha) \models \bbox{r}\varphi$ if for all $j \ge 0$ with $(n, n+j) \in \Rexp(r, w, \alpha)$ we have $(w, n+j, \alpha) \models \varphi$,

\item $(w, n, \alpha) \models \ddiamondle{r}{z}\varphi$ if there exists $j \ge 0$ with $\cst(w_n c_n \cdots c_{n+j-1}w_{n+j}) \le \alpha(z)$ such that $(n, n+j) \in \Rexp(r, w, \alpha)$ and $(w, n+j, \alpha) \models \varphi$, and

\item $(w, n, \alpha) \models \bboxle{r}{z}\varphi$ if for all $j \ge 0$ with $\cst(w_n c_n \cdots c_{n+j-1}w_{n+j}) \le \alpha(z)$ and with $(n, n+j) \in \Rexp(r, w, \alpha)$ we have $(w, n+j, \alpha) \models \varphi$.

\end{itemize}
Here, the relation~$\Rexp(r,w,\alpha) \subseteq \nats\times\nats$ contains all pairs~$(m,n) \in \nats \times \nats$ such that $w_m \cdots w_{n-1}$ matches $r$ and is defined inductively by 
\begin{itemize}
\item $\Rexp(\phi,w,\alpha) = \set{(n, n+1) \mid w_n \models \phi}$ for propositional~$\varphi$,
\item $\Rexp(\psi?,w,\alpha) = \set{(n, n) \mid (w, n, \alpha) \models \psi}$,
\item $\Rexp(r_0 + r_1, w, \alpha) = \Rexp(r_0, w, \alpha) \cup \Rexp(r_1, w, \alpha)$,
\item $\Rexp(r_0 \conc r_1, w, \alpha) = \set{(n_0, n_2) \mid \exists n_1 \text{ s.t. }(n_0,n_1)\in \Rexp(r_0, w, \alpha) \text{ and }$ \newline $ (n_1, n_2) \in \Rexp(r_1, w, \alpha)}$, and
\item $\Rexp(r^*, w, \alpha) = \set{(n,n) \mid n\in\nats} \cup \set{(n_0, n_{k+1}) \mid \exists n_1, \ldots, n_{k} \text{ s.t. } $ \newline $ (n_j, n_{j+1}) \in \Rexp(r, w, \alpha) \text{ for all } j \le k}$.
\end{itemize}
Again, we restrict ourselves to well-formed formulas, i.e., those whose set of variables parameterizing diamond operators and whose set of variables parameterizing box operators are disjoint. 

Using the duality of the operators~$\ddiamondle{r}{z}$ and $\bboxle{r}{z}$ (note that $r$ is left unchanged), one can prove an analogue of Lemma~\ref{lemma_pltlnegation}.

\begin{lemma}
\label{lemma_pldlnegation}
For every $\pldlc$ formula~$\varphi$ there exists an efficiently constructible
$\pldlc$ formula~$\neg \varphi$ s.t.\
\begin{enumerate}
\item $(w,n,\alpha)\models \varphi$ if and only if $(w,n,
\alpha) \not\models \neg \varphi$ for every $w$, every $n$, and every $\alpha$, and
\item $\card{\neg \varphi} =  \card{\varphi}$.
\end{enumerate}
\end{lemma}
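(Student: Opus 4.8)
The plan is to mirror the proof of Lemma~\ref{lemma_pltlnegation} exactly, proceeding by induction over the construction of the $\pldlc$ formula $\varphi$ and pushing negations inward using the dualities of the operators. Concretely, I would define $\neg\varphi$ recursively by the clauses $\neg(p) = \neg p$, $\neg(\neg p) = p$, $\neg(\varphi\wedge\psi) = (\neg\varphi)\vee(\neg\psi)$, $\neg(\varphi\vee\psi) = (\neg\varphi)\wedge(\neg\psi)$, together with the four modal dualities
\[
\neg\ddiamond{r}\varphi = \bbox{r}\neg\varphi,\quad
\neg\bbox{r}\varphi = \ddiamond{r}\neg\varphi,\quad
\neg\ddiamondle{r}{z}\varphi = \bboxle{r}{z}\neg\varphi,\quad
\neg\bboxle{r}{z}\varphi = \ddiamondle{r}{z}\neg\varphi.
\]
The crucial point is that the guard expression $r$ is left entirely untouched by the transformation; only the outermost modality is dualized and the negation is propagated into the inner formula $\varphi$.

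First I would verify the second claim, $\card{\neg\varphi} = \card{\varphi}$, which is immediate: each clause replaces one operator by its dual without changing the structure of the subformula tree or the guards, so the set of subformulas has the same cardinality. I would also observe that $\neg\varphi$ is efficiently (indeed, linearly) constructible, since the transformation is a single pass over the syntax tree. Second, for the semantic correctness claim (item~1), I would carry out the induction over the construction of $\varphi$. The propositional and Boolean cases are routine. For the modal cases I would unfold the semantics: for instance, $(w,n,\alpha)\not\models\ddiamond{r}\varphi$ means there is no $j\ge 0$ with $(n,n+j)\in\Rexp(r,w,\alpha)$ and $(w,n+j,\alpha)\models\varphi$, i.e.\ for all such $j$ we have $(w,n+j,\alpha)\not\models\varphi$; by the induction hypothesis this is equivalent to $(w,n+j,\alpha)\models\neg\varphi$ for all $j$ with $(n,n+j)\in\Rexp(r,w,\alpha)$, which is precisely $(w,n+j,\alpha)\models\bbox{r}\neg\varphi$. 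The parameterized cases $\ddiamondle{r}{z}$ and $\bboxle{r}{z}$ are handled identically, noting that the negation of an existential quantifier over indices $j$ subject to both the cost constraint $\cst(w_n c_n\cdots c_{n+j-1}w_{n+j})\le\alpha(z)$ and the matching constraint $(n,n+j)\in\Rexp(r,w,\alpha)$ is the corresponding universal quantifier over the same constrained range.

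I do not expect a serious obstacle here, since the dualities are De~Morgan-style and the side conditions (both the cost bound and the regular-expression matching relation) appear symmetrically under the existential and universal quantifiers, so negation simply swaps them. The only point that requires a moment's care is that the well-formedness restriction is not listed as a property to preserve in this lemma (unlike item~3 of Lemma~\ref{lemma_pltlnegation}); nevertheless one should notice that the transformation swaps the roles of diamond-parameterizing and box-parameterizing variables wholesale, so if the original formula's diamond-variables and box-variables are disjoint, the same disjoint partition holds for $\neg\varphi$ with the two sides exchanged, and well-formedness is indeed preserved even though it is not claimed explicitly. This is the $\pldlc$ analogue of the fact that negating a $\pltlcf$ formula yields a $\pltlcg$ formula.
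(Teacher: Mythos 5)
Your construction and the induction establishing the two stated claims are correct and are exactly the paper's approach: dualize the outermost modality, push the negation into the argument formula, and leave the guard~$r$ untouched, so that $\Rexp(r,w,\alpha)$ is literally the same relation on both sides and the quantifier over matching positions (subject to the same cost constraint) simply flips. The size claim and efficiency are immediate, as you say.

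However, your closing remark --- that well-formedness is preserved because the transformation \myquot{swaps the roles of diamond-parameterizing and box-parameterizing variables wholesale} --- is false, and the paper explicitly warns about this immediately after the lemma. The point you are missing follows from the very feature you emphasize: since the guard~$r$ is left entirely untouched, any tests~$\psi?$ occurring inside~$r$ keep their formulas~$\psi$ verbatim, including any parameterized operators they contain. Those operators retain their polarity, while only the outermost modality flips. Concretely, the paper's counterexample is the well-formed formula~$\bboxle{(\bboxle{p}{x}p)?}{x}p$ (the variable~$x$ parameterizes only box operators), whose negation is $\ddiamondle{(\bboxle{p}{x}p)?}{x}\neg p$, in which $x$ parameterizes both the outer diamond and the box inside the test --- hence not well-formed. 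This is precisely why this lemma, unlike Lemma~\ref{lemma_pltlnegation}, omits the well-formedness and fragment-swapping items, and why the paper must define $\pldlcbox$ with the additional requirement that the negation be a $\pldlcdiamond$ formula, rather than just mirroring the $\pltlcg$ definition. So the situation is genuinely \emph{not} the $\pldlc$ analogue of \myquot{negating a $\pltlcf$ formula yields a $\pltlcg$ formula}; that analogy breaks down exactly at tests, and asserting it would propagate an error into the treatment of the optimization problems, where the dualization of $\pldlcbox$ games relies on this requirement.
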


Note that we do not claim that negation preserves well-formedness  and that we have not (yet) defined unipolar fragments of $\pldlc$. This is because the former statement is wrong: the negation of the well-formed $\pldlc$-formula $\bboxle{(\bboxle{p}{x}p)?}{x}p$ is $\ddiamondle{(\bboxle{p}{x}p)?}{x}\neg p$, which is not well-formed. The issue is that negation does not flip the duality of parameterized operators in tests, i.e., formulas of the form~$\phi?$ in regular expressions, which also requires us to be careful when defining the unipolar fragments of $\pldlc$: let $\varphi$ be a $\pldlc$ formula.
\begin{itemize}
	\item $\varphi$ is a $\pldlcdiamond$ formula, if it does not contain a parameterized box operator. 
	\item $\varphi$ is a $\pldlcbox$ formula, if it does not contain a parameterized diamond operator \emph{and} if its negation is a $\pldlcdiamond$ formula. 
\end{itemize}
For $\pltlcg$ formulas, the second conjunct in the second item above is trivial, but, as we have seen in the example above, this is no longer true for $\pldlcbox$. The second conjunct is necessary to be able to solve problems for $\pldlcbox$ by dualizing the formula into an $\pldlcdiamond$ formula. This becomes crucial when we consider the optimization problems in Setion~\ref{sec_opt}, the only place where we deal with $\pldlcbox$ formulas.

Finally, Lemma~\ref{lemma_monotonicity} holds for $\pldlc$, too. 

\begin{lemma}
\label{lemma_monotonicity_pldlc}
Let $\varphi$ be a $\pldlc$ formula and let $\alpha$ and
$\beta$ be variable valuations satisfying $\alpha ( x ) \le \beta ( x)$ for
every $x \in \varF( \varphi)$ and $\alpha ( y ) \ge \beta ( y)$ for every $y \in \varG( \varphi)$. If $(w, \alpha) \models \varphi$, then $(w, \beta) \models
\varphi$.
\end{lemma}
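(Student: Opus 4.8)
The plan is to prove the statement by a simultaneous induction over the mutually recursive syntax of $\pldlc$, establishing two claims in parallel: (A) for every subformula~$\psi$ of $\varphi$, every $w$, and every $n$, we have $(w,n,\alpha)\models\psi$ implies $(w,n,\beta)\models\psi$; and (B) for every regular subexpression~$r$ of $\varphi$ and every $w$, an appropriate monotonicity of the matching relation~$\Rexp(r,w,\cdot)$ holds. The atomic, conjunction, and disjunction cases of (A) are immediate, exactly as in the proof of the $\pltlc$-analogue Lemma~\ref{lemma_monotonicity}. The two diamond cases are the easy modal cases: for $\ddiamond{r}\psi$ one takes the witnessing index~$j$ with $(n,n+j)\in\Rexp(r,w,\alpha)$ and $(w,n+j,\alpha)\models\psi$, uses the forward inclusion from (B) to get $(n,n+j)\in\Rexp(r,w,\beta)$, and applies (A) to $\psi$; for $\ddiamondle{r}{x}\psi$ one additionally observes that the cost bound $\le\alpha(x)\le\beta(x)$ is only relaxed, so the same witness survives.

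For claim (B) in the forward direction $\Rexp(r,w,\alpha)\subseteq\Rexp(r,w,\beta)$, the interesting case is the test~$\psi?$: since $\Rexp(\psi?,w,\alpha)=\set{(m,m)\mid (w,m,\alpha)\models\psi}$, the inclusion is exactly the instance of (A) for the subformula~$\psi$, i.e.\ upward monotonicity makes the set of matching pairs grow. The remaining regular-expression constructors~$r_0+r_1$, $r_0\conc r_1$, and $r^*$ preserve inclusions routinely by the inductive hypothesis, and the propositional guards~$\phi$ are independent of the valuation, so $\Rexp(\phi,w,\alpha)=\Rexp(\phi,w,\beta)$.

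The hard part will be the box cases. For $\bbox{r}\psi$ at $(w,n,\beta)$ one must discharge, for every $j$ with $(n,n+j)\in\Rexp(r,w,\beta)$, the obligation $(w,n+j,\beta)\models\psi$; to reduce this to the hypothesis $(w,n,\alpha)\models\bbox{r}\psi$ one needs to pull the obligation back to $\alpha$, which requires the \emph{reverse} inclusion $\Rexp(r,w,\beta)\subseteq\Rexp(r,w,\alpha)$. For $\bboxle{r}{y}\psi$ the cost side cooperates, since $\beta(y)\le\alpha(y)$ only shrinks the index set, but the regular-expression side does not: the reverse inclusion directly contradicts the forward inclusion used for diamonds whenever a guard contains a test with a parameterized operator, because such a test is monotone in one fixed direction regardless of the box/diamond context in which it occurs. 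I therefore expect to have to refine the invariant (B) so that it tracks the \emph{polarity} of each test and proves both inclusions on the relevant sub-expressions at once, using the duality of the parameterized operators (Lemma~\ref{lemma_pldlnegation}) together with the well-formedness condition $\varF(\varphi)\cap\varG(\varphi)=\emptyset$ to keep the two directions mutually consistent. This interaction between a box guard and the tests it contains is exactly the delicate point already signalled by the paper's remark that negation fails to flip parameterized operators inside tests, and I anticipate it, rather than any of the purely structural cases, to be the genuine obstacle in making the induction close.

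\qed
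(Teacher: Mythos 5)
The paper states this lemma without proof---it is carried over from the \pltlc{} case by the single sentence that Lemma~\ref{lemma_monotonicity} ``holds for $\pldlc$, too''---so your attempt must stand on its own, and it does not: it is a proof plan whose decisive case you explicitly leave open. The structural part is fine (the simultaneous induction with claim (A) for formulas and claim (B) for $\Rexp$, the diamond cases, the test case, and closure of the $\Rexp$-inclusions under $+$, $\conc$ and iteration), and your diagnosis of the box case is exactly right: $\bbox{r}\psi$ and $\bboxle{r}{y}\psi$ need the \emph{reverse} inclusion $\Rexp(r,w,\beta)\subseteq\Rexp(r,w,\alpha)$, which collides with the forward inclusion needed under diamonds as soon as a test contains a parameterized operator. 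But what you offer for this case is a hope (``refine the invariant to track polarity, use well-formedness''), not an argument, and that hope cannot be realized, because the lemma as stated is false. Consider $\varphi=\bbox{(\ddiamondle{p}{x}q)?}\ffalse$ with $\ffalse = p \wedge \neg p$. It contains no parameterized box, so $\varF(\varphi)=\set{x}$, $\varG(\varphi)=\emptyset$, and $\varphi$ is well-formed (it is even a $\pldlcdiamond$ formula in the paper's sense); semantically $\varphi\equiv\neg\,\ddiamondle{p}{x}q$, which is \emph{antitone} in $\alpha(x)$. Concretely, on the cost-trace $w=\set{p}\,1\,\set{q,\inc}\,0\,\emptyset\,0\,\emptyset\,0\cdots$ we have $(w,\alpha)\models\varphi$ for $\alpha(x)=0$ (the only match of the guard $p$ from position~$0$ is $(0,1)$, whose cost~$1$ exceeds~$0$, so the inner diamond fails and the test never fires), while $(w,\beta)\not\models\varphi$ for $\beta(x)=1\ge\alpha(x)$, since now the cost bound is met, $q\in w_1$, and the test fires. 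This contradicts the claimed implication.

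The reason your proposed repair cannot close the induction is that well-formedness constrains only \emph{which} operators a variable parameterizes, not the \emph{polarity} with which those occurrences sit inside tests: a test under a box guard is a negative occurrence, so a parameterized diamond inside it behaves like a parameterized box, and the hypothesis $\varF(\varphi)\cap\varG(\varphi)=\emptyset$ gives no handle on this. In other words, the obstacle you flagged is not a weakness of your induction but a defect of the statement itself, with the same root cause as the pathology the paper exhibits when noting that negation does not flip parameterized operators inside tests. A true version of the lemma needs stronger hypotheses---for instance, forbidding parameterized operators inside tests, or defining $\varF$, $\varG$ and well-formedness polarity-aware, so that a parameterized diamond occurring under an odd number of box guards counts toward $\varG$---mirroring the extra negation-closure condition the paper imposes on $\pldlcbox$ (and, tellingly, omits for $\pldlcdiamond$, which is why the counterexample above lives in that fragment). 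Under any such restriction, your two-claim induction closes exactly as you outline, with (B) split into a forward inclusion for guards in positive position and a reverse inclusion for guards in negative position; without it, no refinement of the invariant can succeed.
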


The alternating-color technique is applicable to $\pldl$~\cite{FaymonvilleZimmermann14}: to this end, one introduces changepoint-bounded variants of the unparameterized diamond operator and of the unparameterized box operator whose semantics only quantify over infixes with at most one changepoint. $\ldl$ formulas with  changepoint-bounded operators can be translated into Büchi automata of exponential size. As usual, the parameterized box operators can again be disregard due to monotonicity. Thus, given a $\pldldiamond$ formula, one replaces every diamond operator by a changepoint-bounded one and can then show that both formulas are equivalent, provided the distance between color-changes is appropriately bounded and spaced. This allows to extend the algorithms for model checking and realizability based on the alternating-color technique to $\pldl$. The detailed construction is described in~\cite{FaymonvilleZimmermann14}.

In the setting with costs investigated here, the approach is similar: one has to replace the parameterized diamond operators by changepoint-bounded ones. Furthermore, the formula~$\chi = (\G\F p \wedge \G\F \neg p) \leftrightarrow \G\F \inc$ used in the applications of the alternating-color technique in Sections~\ref{sec_mc} and \ref{sec_games} is replaced by an equivalent $\ldl$ formula, which is possible as $\ldl$ subsumes $\ltl$. The resulting formula is again translatable into a Büchi automaton of exponential size. Thus, the constructions presented in the previous two sections solving the model checking and the game problem are again applicable.

\begin{theorem}
The $\pldlc$ model checking problem is $\pspace$-complete
and solving infinite games with $\pldlc$ winning conditions is $\twoexp$-complete.
\end{theorem}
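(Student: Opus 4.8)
The plan is to transfer wholesale the machinery developed for $\pltlc$ in Sections~\ref{sec_mc} and \ref{sec_games} to $\pldlc$, the only genuinely new ingredient being a relativization that respects the regular-expression guards. First, I would reduce to the diamond fragment. Since both the model checking and the game problem ask only for the \emph{existence} of a variable valuation, Lemma~\ref{lemma_monotonicity_pldlc} lets me fix every variable parameterizing a box operator to its smallest value: I replace each parameterized box operator by its $y=0$ instance, which is expressible without the parameter since cost zero is definable via $\inc$, thereby obtaining an equivalent (for the purposes of both problems) $\pldlcdiamond$ formula. Well-formedness plays no role here, since the formula is ultimately relativized into a \emph{variable-free} $\ldl$ formula before any automaton is built, so the negation-does-not-preserve-well-formedness phenomenon never arises.

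Second, I would set up the alternating-color technique for the guarded diamond. Following the $\pldl$ construction, I introduce a changepoint-bounded variant of the unparameterized diamond whose semantics restrict the witnessing infix $(n,n+j)$ with $(n,n+j)\in\Rexp(r,w,\alpha)$ to those spanning at most one changepoint of the coloring. Given a $\pldlcdiamond$ formula $\varphi$, I define $\rel{\varphi}$ by recursively replacing each parameterized diamond $\ddiamondle{r}{x}\psi$ by the changepoint-bounded diamond guarding $\rel{\psi}$ under $r$, relativizing also inside every test $\psi?$ occurring in the guards. The key lemma to prove is then the exact analogue of Lemma~\ref{lemma_altcolor}: on a $(k+1)$-spaced coloring with $k=\max_x\alpha(x)$, $(w,\alpha)\models\varphi$ implies the coloring satisfies $\rel{\varphi}$; and on a $k$-bounded coloring, satisfaction of $\rel{\varphi}$ implies $(w,\alpha)\models\varphi$ for the valuation mapping every variable to $2k$. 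The argument is again an induction over the construction of $\varphi$ in which the only interesting case is the parameterized diamond, where the spacing/boundedness assumptions convert a one-changepoint match of $r$ into a match of cost at most $2k$ and conversely, exactly as in the proof of Lemma~\ref{lemma_altcolor}, now threaded through the inductive definition of $\Rexp$ and through the tests.

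Third, I would rebuild the automaton and plug it into both constructions. The formula $\chi = (\G\F p \wedge \G\F\neg p)\leftrightarrow\G\F\inc$ is an $\ltl$ formula, hence expressible in $\ldl$, so $\neg\rel{\varphi}\wedge\chi$ is an $\ldl$ formula with changepoint-bounded operators, recognized by a nondeterministic Büchi automaton $\aut$ of size exponential in $\size{\varphi}$ by the cited translation — exactly the size bound used for $\pltlc$. For model checking I would form the colored Büchi graph with costs $\sys\times\aut$ as in Section~\ref{sec_mc}; the characterization of Lemma~\ref{lemma_mccharacterization} goes through unchanged, as it uses only the alternating-color lemma (now in its $\pldlc$ form) and the structure of $\aut$, while Lemma~\ref{lemma_upnonemptiness} and the on-the-fly non-emptiness test are logic-independent, yielding $\pspace$-membership. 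For games I would extend the arena to $\arena'$ as in Section~\ref{sec_games}, take the relativized winning condition $\rel{\varphi}\wedge\chi$ under blinking semantics, and solve the resulting $\ldl$ game in doubly-exponential time via the exponential automaton $\aut$; the strategy-transfer argument of Lemma~\ref{lemma_gamescharacterization} again invokes only the alternating-color lemma and so carries over. Hardness for both problems is inherited from $\ltl$, a fragment of $\pldlc$: $\pspace$-hardness from $\ltl$ model checking and $\twoexp$-hardness from $\ltl$ games.

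The step I expect to be the main obstacle is the second one, namely proving the alternating-color lemma for the guarded operator. Unlike the bare eventuality of $\pltlc$, the parameterized diamond is guarded by a regular expression $r$ that may itself contain tests $\psi?$ with further parameterized subformulas, so the relativization and the induction must be threaded correctly through the inductive definition of $\Rexp$, and one must verify that the ``at most one changepoint'' restriction interacts correctly with concatenation $r_0\conc r_1$ and with the Kleene star $r^*$ inside the guards. A secondary technical point is the exponential-size translation of changepoint-bounded $\ldl$ operators into Büchi automata, which I would import from the $\pldl$ construction; once these two ingredients are in place, the model-checking and game arguments are word-for-word adaptations of Sections~\ref{sec_mc} and \ref{sec_games}.
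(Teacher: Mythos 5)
Your proposal follows essentially the same route as the paper: reduce to the diamond fragment via monotonicity (Lemma~\ref{lemma_monotonicity_pldlc}), relativize parameterized diamonds into changepoint-bounded operators as in the $\pldl$ construction of~\cite{FaymonvilleZimmermann14}, replace $\chi$ by an equivalent $\ldl$ formula, translate $\neg\rel{\varphi}\wedge\chi$ into an exponential-size B\"uchi automaton, and then rerun the model-checking and game constructions of Sections~\ref{sec_mc} and~\ref{sec_games} unchanged, with hardness inherited from $\ltl$. The paper's own argument is in fact only a sketch that defers the changepoint-bounded machinery and its correctness to the cited $\pldl$ work, so your write-up is the same proof spelled out in somewhat more detail, including a correct identification of where the real technical work lies (threading the relativization through $\Rexp$, tests, and the automaton translation).
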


\section{Multiple Cost Functions}
\label{sec_mult}
In this section, we consider parameterized temporal logics with  multiple cost-functions. For the sake of simplicity, we restrict our attention to $\pltlc$, although all results hold for $\pldlc$, too.

Fix some dimension~$d \in \nats$. The syntax of $\mpltlc$ is obtained by equipping the parameterized temporal operators by a coordinate~$i \in \set{1, \ldots, d}$, denoted by $\F_{\le_i x}$ and $\G_{\le_i y}$. Here, a cost-trace is of the form
$
w_0\, \overline{c}_0\, 
w_1\, \overline{c}_1\, 
w_2\, \overline{c}_2\, 
\cdots$
with $w_n \in \pow{P}$ and $\overline{c}_n \in \nats^d$. Thus, for every $i \in \set{1, \ldots, d}$, we can define \[\cst_i(w_0 \overline{c}_0 \cdots \overline{c}_{n-1} w_{n}) = \sum_{j=0}^{n-1} (\overline{c}_j)_i\] for every finite cost-trace~$w_0 \overline{c}_0 \cdots \overline{c}_{n-1} w_{n}$, where $(\overline{c}_j)_i$ denotes the $i$-th entry of the vector~$\overline{c}_j$. Furthermore, we require for every coordinate~$i$ a proposition~$\inc_i$ such that $\inc_i \in w_{n+1}$ if and only if $(\overline{c}_n)_i >0$. 

The semantics of atomic formulas, boolean connectives, and unparameterized temporal operators are unchanged and for the parameterized operators, we define 
\begin{itemize}
	\item $(w,n,\alpha)\models\F_{\le_i z}\varphi$ if and only if there exists a $j \ge 0$ with\newline $\cst_i(w_{n} \overline{c}_n \cdots \overline{c}_{n+j-1} w_{n+j}) \le \alpha(z)$ such that $(w,n+j,\alpha)\models\varphi$, and

\item $(w,n,\alpha)\models\G_{\le_i z}\varphi$ if and only if for every $j \ge 0$ with \newline$\cst_i(w_{n} \overline{c}_n \cdots \overline{c}_{n+j-1} w_{n+j}) \le \alpha(z)$: $(w,n+j,\alpha)\models\varphi$.

\end{itemize}

In this setting, we consider the model checking problem for transition systems with $d$ cost functions and want to solve games in arenas with $d$ cost functions.

\begin{example}
A Streett condition with costs~$(Q_i, P_i)_{i \in \{1, \ldots, d\}}$~\cite{FijalkowZimmermann14} can be expressed\footnote{The same disclaimer as for the parity condition with costs applies here, see Footnote~\ref{footnote_costparity}.} in $\mpltlc$ via
\[
\F\G  \left( \bigwedge\nolimits_{i \in \set{1, \ldots, d}} \left(Q_i \rightarrow  \F{}_{\le_i x}\, P_i 
\right)\right).
\]
\end{example}

Again, we restrict ourselves to formulas where no variable parameterizes an eventually- and an always operator, but we allow a variable to parameterize operators with different coordinates. Also, the fragments~$\mpltlcf$ and $\mpltlcg$ are defined as for $\pltlc$, i.e., a formula is a $\mpltlcf$ formula (a $\mpltlcg$ formula), if it does not contain parameterized always operators (parameterized eventually operators).

Lemma~\ref{lemma_pltlnegation} can be extended to $\mpltlc$ by adding the rules~$\neg (\F_{\le_i x}\varphi ) = \G_{\le_i x} \neg \varphi$ and $\neg (\G_{\le_i y}\varphi ) = \F_{\le_i y} \neg \varphi$ to the proof.

\begin{lemma}
\label{lemma_mpltlcnegation}
For every $\mpltlc$ formula~$\varphi$ there exists an efficiently constructible
$\mpltlc$ formula~$\neg \varphi$ s.t.\
\begin{enumerate}
\item $(w,n,\alpha)\models \varphi$ if and only if $(w,n,
\alpha) \not\models \neg \varphi$ for every $w$, every $n$, and every $\alpha$,
\item $\card{\neg \varphi} =  \card{\varphi}$.
\item If $\varphi$ is well-formed, then so is $\neg \varphi$.
\item If $\varphi$ is an $\ltl$ formula, then so is $\neg \varphi$.
\item If $\varphi$ is a $\mpltlcf$ formula, then $\neg \varphi$ is a $\mpltlcg$ formula
and vice versa.
\end{enumerate}
\end{lemma}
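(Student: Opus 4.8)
The plan is to prove Lemma~\ref{lemma_mpltlcnegation} by a straightforward extension of the proof of Lemma~\ref{lemma_pltlnegation}, which already handles the purely boolean connectives and the unparameterized temporal operators~$\U$ and $\R$. Since $\mpltlc$ differs from $\pltlc$ only in that the parameterized operators carry an extra coordinate~$i$, the entire inductive construction of $\neg\varphi$ by pushing negations to the atomic propositions carries over verbatim; the only genuinely new cases are the two parameterized operators, which are dualized coordinate-wise as the lemma statement indicates:
\begin{align*}
\neg(\F_{\le_i x}\varphi) &= \G_{\le_i x}\,\neg\varphi, &
\neg(\G_{\le_i y}\varphi) &= \F_{\le_i y}\,\neg\varphi.
\end{align*}

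First I would define $\neg\varphi$ by induction over the construction of $\varphi$, reusing all the rules from the proof of Lemma~\ref{lemma_pltlnegation} for the shared syntactic constructs (atomic propositions, negated propositions, $\wedge$, $\vee$, $\U$, $\R$) and adding exactly the two displayed rules above for the coordinate-indexed parameterized operators. Then I would verify the five claims in turn. For item~1, the only new inductive steps are the two parameterized operators; here the key observation is that the semantics of $\F_{\le_i z}$ and $\G_{\le_i z}$ are De~Morgan duals with respect to the \emph{fixed} quantification domain $\{\,j\ge 0 \mid \cst_i(w_n\overline{c}_n\cdots \overline{c}_{n+j-1}w_{n+j})\le\alpha(z)\,\}$: a trace fails to satisfy ``there exists such a $j$ with $(w,n+j,\alpha)\models\varphi$'' exactly when ``for all such $j$ we have $(w,n+j,\alpha)\models\neg\varphi$'' holds, which by the induction hypothesis is precisely $(w,n,\alpha)\models\G_{\le_i z}\neg\varphi$, and symmetrically for the box. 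Crucially the coordinate~$i$ and the cost-trace~$w$ are unchanged by negation, so the quantification domain is identical on both sides and the duality is exact. Items~2 (size preservation) and~4 ($\ltl$ formulas stay $\ltl$) are immediate from the rules, since each rule replaces one operator by exactly one dual operator and introduces no new variables.

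The remaining claims concern the variable bookkeeping. For item~5, I would observe that each rule sends a parameterized eventually operator to a parameterized always operator and vice versa while leaving the parameterizing variable and its coordinate untouched; consequently $\varF(\neg\varphi)=\varG(\varphi)$ and $\varG(\neg\varphi)=\varF(\varphi)$, so an $\mpltlcf$ formula (one with $\varG(\varphi)=\emptyset$) negates to an $\mpltlcg$ formula and conversely. Item~3 (well-formedness) then follows immediately from item~5's identities: $\varF(\neg\varphi)\cap\varG(\neg\varphi)=\varG(\varphi)\cap\varF(\varphi)=\emptyset$ whenever $\varphi$ is well-formed. I do not expect any real obstacle here; the proof is essentially bureaucratic once one notes that the extra coordinate plays no role in the duality argument beyond being faithfully preserved. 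The only point warranting a sentence of care is confirming that the fixed-domain De~Morgan duality for the cost-bounded quantifiers is genuinely exact (the domain is the \emph{same} set of positions on both sides), exactly as in the single-cost-function case of Lemma~\ref{lemma_pltlnegation}, which is why the multi-dimensional generalization requires no new ideas.
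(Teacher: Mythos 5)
Your proof is correct and takes essentially the same approach as the paper: the paper's own argument consists of extending the inductive construction from Lemma~\ref{lemma_pltlnegation} by precisely the two coordinate-indexed dualization rules $\neg(\F_{\le_i x}\varphi) = \G_{\le_i x}\neg\varphi$ and $\neg(\G_{\le_i y}\varphi) = \F_{\le_i y}\neg\varphi$ that you give. Your verification of the five claims (the fixed-domain De~Morgan duality for item~1 and the bookkeeping identities $\varF(\neg\varphi)=\varG(\varphi)$ and $\varG(\neg\varphi)=\varF(\varphi)$ for items~3 and~5) is merely a more explicit spelling-out of what the paper leaves implicit.
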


Furthermore, Lemma~\ref{lemma_monotonicity} holds for $\mpltlc$ as well. 

\begin{lemma}
\label{lemma_monotonicity}
Let $\varphi$ be a $\pltlc$ formula and let $\alpha$ and
$\beta$ be variable valuations satisfying $\alpha ( x ) \le \beta ( x)$ for
every $x \in \varF( \varphi)$ and $\alpha ( y ) \ge \beta ( y)$ for every $y \in \varG( \varphi)$. If $(w, \alpha) \models \varphi$, then $(w, \beta) \models
\varphi$.
\end{lemma}

The alternating-color technique is straightforwardly extendable to the new logic $\mpltlc$: one introduces a fresh proposition~$p_i$ for each coordinate~$i$ and defines~$\chi = \bigwedge_{i=1}^d ( (\G\F p_i \wedge \G\F \neg p_i) \leftrightarrow \G\F \inc_i )$. Furthermore, the notions of $i$-blocks, $k$-boundedness in coordinate~$i$, and $k$-spacedness in coordinate~$i$ are defined as expected. Then, the proofs presented in Section~\ref{sec_mc} and Section~\ref{sec_games} can be extended to  the setting with multiple cost-functions.

In the model checking case, the third component of the set of states of the colored Büchi graph~$\sys \times \aut$ has the form~$\pow{\set{p_1, \ldots, p_d}}$, i.e., it is of exponential size. However, this is no problem, as the automaton~$\aut$ is already of exponential size. Similarly, in the case of infinite games, each vertex of the original arena has $2^d$ copies in $\arena'$, one for each element in $\pow{\set{p_1, \ldots, p_d}}$ allowing Player~$0$ to produce appropriate colorings with the propositions~$p_i$. The resulting game has an arena of exponential size (in the size of the original arena and of the original winning condition) and an $\ltl$ winning condition under blinking semantics. Such a game can still be solved in doubly-exponential time. To this end, one turns the winning condition into a deterministic parity automaton of doubly-exponential size with exponentially many colors, constructs the product of the arena and the parity automaton, which yields a parity game of doubly-exponential size with exponentially many colors. Such a game can be solved in doubly-exponential time~\cite{Schewe07}. 

\begin{theorem}
The $\mpltlc$ model checking problem is $\pspace$-complete and solving infinite games with $\mpltlc$ winning conditions is $\twoexp$-com\-plete.
\end{theorem}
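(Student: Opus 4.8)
The plan is to mirror the development of Sections~\ref{sec_mc} and~\ref{sec_games} coordinate-wise, using the $d$-color relativization sketched above. Hardness is immediate: $\ltl$ model checking is $\pspace$-hard~\cite{SistlaClarke85} and determining the winner of an $\ltl$ game is $\twoexp$-hard~\cite{PnueliRosner89,PnueliRosner89a}, and $\ltl$ is the variable-free fragment of $\mpltlc$. For membership, I would first reduce to $\mpltlcf$ formulas exactly as in the single-cost case: using the extension of Lemma~\ref{lemma_monotonicity} to $\mpltlc$, every subformula~$\G_{\le_i y}\psi$ may be replaced by $\psi \vee \X(\neg \inc_i \U (\neg \inc_i \wedge \psi))$, which is equivalent to $\G_{\le_i y}\psi$ when $y$ is mapped to zero. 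This preserves satisfaction (resp.\ the existence of a winning strategy) with respect to some valuation.

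Next I would set up the multi-color relativization. For an $\mpltlcf$ formula~$\varphi$, define $\rel{\varphi}$ by recursively replacing every subformula~$\F_{\le_i x}\psi$ by
\[
(p_i \rightarrow p_i \U (\neg p_i \U \rel{\psi})) \wedge (\neg p_i \rightarrow \neg p_i \U (p_i \U \rel{\psi})),
\]
so that the $i$-th color~$p_i$ guards the relativization of an $i$-parameterized operator. With the notions of $i$-block, $k$-boundedness in coordinate~$i$, and $k$-spacedness in coordinate~$i$ defined as expected, the coordinate-wise analogue of Lemma~\ref{lemma_altcolor} holds with essentially the same proof: the only nontrivial case is a parameterized eventuality $\F_{\le_i x}\psi$, where one argues, using $p_i$ and $\cst_i$ in place of $p$ and $\cst$, that spacing (resp.\ boundedness) in coordinate~$i$ bounds the number of $i$-changepoints crossed by a witnessing suffix. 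The formula~$\chi = \bigwedge_{i=1}^d((\G\F p_i \wedge \G\F \neg p_i) \leftrightarrow \G\F \inc_i)$ enforces, for each coordinate, that the trace has infinitely many $i$-changepoints exactly when its $i$-cost is infinite; it is translatable into a Büchi automaton of size exponential in the input.

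With these pieces in hand, the reductions behind Lemma~\ref{lemma_mccharacterization} and Lemma~\ref{lemma_gamescharacterization} go through. For model checking, the colored Büchi graph~$\sys \times \aut$ carries its colors in a third component ranging over $\pow{\set{p_1, \ldots, p_d}}$, and pumpability now requires that every $i$-block contains a cycle of non-zero $i$-cost, for every~$i$. Since the automaton~$\aut$ for $\neg\rel{\varphi} \wedge \chi$ is already of exponential size and the third component contributes only a factor~$2^d$, the graph stays of size exponential in the input, and the on-the-fly search for an ultimately periodic pumpable fair path (the analogue of Lemma~\ref{lemma_upnonemptiness}) still runs in $\pspace$. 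For games, the extended arena splits each vertex into its $2^d$ colored copies, letting Player~$0$ fix the truth values of all $p_i$; the resulting $\ltl$ game under blinking semantics has an arena of exponential size, which one solves in doubly-exponential time by determinizing $\rel{\varphi} \wedge \chi$ into a parity automaton of doubly-exponential size with exponentially many colors and applying Schewe's algorithm~\cite{Schewe07}, exactly as indicated before the theorem.

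The main obstacle is the pumping step in the characterization lemmas, which must now supply, \emph{simultaneously for all $d$ coordinates}, a vertex repetition inside each $i$-block whose induced cycle has non-zero $i$-cost. In the forward direction of model checking one constructs a coloring that is, for each $i$, both bounded and spaced in coordinate~$i$; the counting argument that forced a repetition within a single block (taking the sets of vertices between consecutive edges of non-zero $i$-cost) must be carried out per coordinate, tracking that the relevant colored copies agree on the $i$-th color. The delicate point is that one fixed bound~$k$ must serve all coordinates at once, so $k$ is chosen large enough to dominate the per-coordinate counting bounds (on the order of $\size{S}\cdot\size{Q}$, resp.\ $\card{V}\cdot\card{M}$, times the largest cost); monotonicity then lets a single valuation map every variable to~$2k$. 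Once this is verified, the complexity claims follow precisely as in Theorems~\ref{thm_mc} and~\ref{thm_games}.
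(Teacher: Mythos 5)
Your proposal follows essentially the same route as the paper's own (quite brief) proof: per-coordinate colors~$p_i$ with $\chi = \bigwedge_{i=1}^d\bigl((\G\F p_i \wedge \G\F \neg p_i) \leftrightarrow \G\F \inc_i\bigr)$, coordinate-wise notions of blocks, boundedness and spacedness, re-running the constructions of Sections~\ref{sec_mc} and~\ref{sec_games}, and absorbing the $2^d$ blow-up exactly as the paper does --- harmless for model checking since $\aut$ is already exponential, and handled for games by determinizing into a doubly-exponential parity automaton with exponentially many colors and invoking Schewe's algorithm~\cite{Schewe07}. Since the paper itself only gives this sketch, your write-up (explicit relativization per coordinate, the monotonicity-based reduction to the $\F$-fragment via $\inc_i$, the hardness inheritance, and the flagged per-coordinate pumping/counting step, where the count should strictly also carry the $2^{d-1}$ factor for the unconstrained colors) is, if anything, more detailed than the paper's own proof.
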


Again, the same results hold for $\mpldlc$, which is defined as expected. 

\section{Optimization Problems}
\label{sec_opt}
It is natural to treat model checking and solving games with specifications in parameterized linear temporal logics as an optimization problem: determine the \emph{optimal} variable valuation such that the system satisfies the specification with respect to it. For parameterized eventualities, we are interested in minimizing the waiting times while for parameterized always', we are interested in maximizing the waiting times. Due to the undecidability results for not well-defined formulas one considers the optimization problems for the unipolar fragments, i.e., for formulas having either no parameterized eventualities or no parameterized always'. In this section, we present algorithms for such optimization problems given by $\pltlc$ specifications. In the following, we encode the weights of the transition system or arena under consideration in unary to obtain our results. Whether these results can also be shown for a binary encoding is an open question. 

For model checking, we are interested in the following four problems: given a transition system~$\sys$ and a $\pltlcf$ formula~$\varphi_{\F}$ and a $\pltlcg$ formula~$\varphi_{\G}$, respectively, determine
\begin{enumerate}
\item\label{item_minminmc} $\min_{\{\alpha\mid \text{$\sys$ satisfies $\varphi_{\F}$ w.r.t.\ $\alpha$}\}}\min_{x\in \varF(\varphi_{\F})}\alpha(x)$,

\item $\min_{\{\alpha\mid \text{$\sys$ satisfies $\varphi_{\F}$ w.r.t.\ $\alpha$}\}}\max_{x\in \varF(\varphi_{\F})}\alpha(x)$,

\item $\max_{\{\alpha\mid \text{$\sys$ satisfies $\varphi_{\G}$ w.r.t.\ $\alpha$}\}}\max_{y\in \varG(\varphi_{\G})}\alpha(y)$, and

\item $\max_{\{\alpha\mid \text{$\sys$ satisfies $\varphi_{\G}$ w.r.t.\ $\alpha$}\}}\min_{y\in \varG(\varphi_{\G})}\alpha(y)$.

\end{enumerate}

Applying the monotonicity of the parameterized operators and (in the first case) the alternating-color technique to all but one variable reduces the four optimization problems to ones where the specification has a single variable (cp.~\cite{AlurEtessamiLaTorrePeled01}). Furthermore, the upper bounds presented in Corollary~\ref{cor_mcub} and in Lemma~\ref{lemma_ubmcalways} yield an exponential search space for an optimal valuation: if this space is empty, then there is no $\alpha$ such that $\sys$ satisfies $\varphi_{\F}$ with respect to $\alpha$ in the first two cases. On the other hand, if the search space contains every such $\alpha$, then $\sys$ satisfies $\varphi_{\G}$ with respect to every $\alpha$ in the latter two cases. 

Thus, it remains the check whether the specification is satisfied with respect to some valuation that is bounded exponentially. In this setting, one can construct an exponentially sized non-deterministic Büchi automaton recognizing the models of the specification with respect to the given valuation (using a slight adaption of the construction presented in~\cite{Zimmermann13} accounting for the fact that we keep track of cost instead of time). This automaton can be checked for non-emptiness in polynomial space using an on-the-fly construction. Thus, an optimal $\alpha$ can be found in polynomial space by binary search.

\begin{theorem}
The $\pltlc$ model checking optimization problems can be solved in polynomial space, if the weights are encoded in unary.
\end{theorem}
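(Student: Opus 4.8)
The plan is to follow the three-step recipe sketched just before the statement: reduce each of the four problems to a model checking instance whose specification has a single parameter, bound the optimal value of that parameter by an exponential so that a binary search needs only polynomially many queries, and answer each query—``does $\sys$ satisfy a single-parameter formula with respect to the concrete valuation $z \mapsto b$?''—in polynomial space. For the reduction to a single variable I would split the four problems into the two \emph{balanced} ones (minimize the maximum over the variables; maximize the minimum) and the two \emph{extreme} ones (minimize the minimum; maximize the maximum). For the balanced problems, Lemma~\ref{lemma_monotonicity} shows the optimum is attained by a \emph{uniform} valuation: given any satisfying $\alpha$ with maximum (resp.\ minimum) value $m$, the uniform valuation mapping every variable to $m$ still satisfies the formula—upward for the $\pltlcf$ case, downward for the $\pltlcg$ case—and has the same objective value, so one may substitute a single fresh variable and optimize its value $b$. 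For the extreme problems one singles out one optimizing variable $z^*$ and fixes every other variable to its most permissive value: for the $\pltlcf$ (minimize-the-minimum) problem the remaining eventualities are made effectively unbounded, which I would realize by relativizing them with the alternating-color technique of Lemma~\ref{lemma_altcolor} (as in Section~\ref{sec_mc}), whereas for the $\pltlcg$ (maximize-the-maximum) problem the remaining always-operators are set to $0$ using the equivalence $\G_{\le y}\psi \equiv \psi \vee \X(\neg\inc\,\U\,(\neg\inc \wedge \psi))$ from the discussion following Lemma~\ref{lemma_monotonicity}. After this step each problem is an optimization over the single value $b$ of one remaining parameter.

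Next I would delimit the search. Corollary~\ref{cor_mcub} and Lemma~\ref{lemma_ubmcalways} provide a bound $B$, exponential in $\size{\varphi}$ and linear in the number of states of $\sys$ and in the maximal cost, with the following effect on the boundary cases: for the $\pltlcf$ problems, if no value in $\{0,\dots,B\}$ yields satisfaction then no satisfying $\alpha$ exists at all (the optimum is undefined); for the $\pltlcg$ problems, if \emph{every} value in $\{0,\dots,B\}$ yields satisfaction then $\sys$ satisfies the specification with respect to every valuation. Between these extremes, monotonicity makes the set of satisfying values of $b$ an interval, whose endpoint is exactly the sought optimum; hence it can be located by binary search using $\mathcal{O}(\log B)$, i.e.\ polynomially many, membership queries.

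Each query asks whether $\sys$ satisfies the single-parameter formula with respect to a concrete valuation $z \mapsto b$ with $b \le B$. I would answer it as in Section~\ref{sec_mc}, but with the parameter value \emph{fixed} rather than existentially quantified, so that no coloring is needed and cost can be tracked directly: following the fixed-valuation construction of~\cite{Zimmermann13} adapted to costs, build a nondeterministic Büchi automaton for the models of the negation (a $\pltlcg$ resp.\ $\pltlcf$ formula by Lemma~\ref{lemma_pltlnegation}) with respect to $z\mapsto b$. This automaton augments the usual $\ltl$-style state by accumulated-cost counters that are reset at the scope of each parameterized operator and capped at $b$. Because the weights are encoded in unary and $b \le B$, these counters range over a set of size exponential in $\size{\varphi}$ and polynomial in the remaining input, so the automaton has the same size and its product with $\sys$ can be tested for non-emptiness on-the-fly in polynomial space; $\sys$ satisfies the formula with respect to $z \mapsto b$ iff this product is empty. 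Since $\pspace$ is closed under complement and the polynomially many binary-search queries reuse the same space, the whole procedure runs in polynomial space.

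The main obstacle I expect is the fixed-valuation automaton of the third step. As already in Lemma~\ref{lemma_mccharacterization}, the difficulty is that accumulated cost need not grow at every step: a parameterized eventuality whose budget is never exhausted—because the trace incurs only finite cost from some point on—must behave like an \emph{unbounded} eventuality, and the counter bookkeeping (resetting, capping, and the Büchi acceptance governing such pending-but-bounded obligations) has to reflect this so that the automaton recognizes exactly the models with respect to $z \mapsto b$. Keeping this automaton of size exponential in $\size{\varphi}$ and polynomial in the unary-encoded remaining input is precisely what pins the complexity to $\pspace$; under a binary encoding the counters would be exponential in the input and the on-the-fly argument breaks, which is why the binary case is left open.
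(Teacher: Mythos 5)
Your proposal follows the paper's own recipe essentially verbatim: reduce the four problems to single-variable instances via monotonicity (plus the alternating-color technique for the minimize-the-minimum case), bound the search space by Corollary~\ref{cor_mcub} and Lemma~\ref{lemma_ubmcalways}, decide each fixed-valuation query with an exponentially sized nondeterministic B\"uchi automaton (the cost-adapted construction of~\cite{Zimmermann13}) checked on-the-fly in polynomial space, and finish by binary search. Your elaborations—the uniform-valuation argument for the two \emph{balanced} problems and the explicit negation-automaton emptiness test—are correct fillings-in of details the paper leaves implicit, not a different route.
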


A similar approach works for infinite games as well. Here, we are interested in computing 
\begin{enumerate}
\item\label{item_minmingame} $\min_{\{\alpha\mid \text{Pl.
 $0$ has winning strategy for $\game_{\F}$ w.r.t.\ $\alpha$}\}}\min_{x\in \varF(\varphi_{\F})}\alpha(x)$,

\item $\min_{\{\alpha\mid \text{Pl.
 $0$ has winning strategy for $\game_{\F}$ w.r.t.\ $\alpha$}\}}\max_{x\in \varF(\varphi_{\F})}\alpha(x)$,

\item $\min_{\{\alpha\mid \text{Pl.
 $0$ has winning strategy for $\game_{\G}$ w.r.t.\ $\alpha$}\}}\min_{x\in \varG(\varphi_{\G})}\alpha(x)$, and

\item $\min_{\{\alpha\mid \text{Pl.
 $0$ has winning strategy for $\game_{\G}$ w.r.t.\ $\alpha$}\}}\max_{x\in \varG(\varphi_{\G})}\alpha(x)$.

\end{enumerate}
and witnessing winning strategies for given $\pltlc$ games $\game_{\F}$ with $\pltlcf$ winning condition $\varphi_{\F}$ and $\game_{\G}$ with $\pltlcg$ winning condition $\varphi_{\G}$.

Again, one can reduce these problems to  the case of winning conditions with a single variable and by applying determinacy of games with respect to a fixed valuation, it even suffices to consider the case of $\pltlcf$ winning conditions with a single variable, due to duality of games: swapping the players in a game with $\pltlcg$ winning condition yields a game with $\pltlcf$ winning condition. Corollary~\ref{cor_gamesub} gives a doubly-exponential upper bound on an optimal variable valuation. Hence, one can construct a deterministic parity automaton of  triply-exponential size with exponentially many colors recognizing the models of the specification with respect to a fixed variable valuation~$\alpha$ that is below the upper bound (again, see~\cite{Zimmermann13} for the detailed construction). Player~$0$ wins the parity game played in the original arena but using the language of the automaton as winning condition if and only if she has a winning strategy for the $\pltlcf$ game with respect to $\alpha$. Such a parity game can be solved in triply-exponential time~\cite{Schewe07}.

\begin{theorem}
The $\pltlc$ optimization problems for infinite games can be solved in triply-exponential time, if the weights are encoded in unary.
\end{theorem}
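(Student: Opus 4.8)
The plan is to mirror the strategy used for the model-checking optimization problems, replacing the on-the-fly non-emptiness check for a fixed valuation by a game-solving step. First I would collapse the four problems to a single canonical one. As in the model-checking case (cp.\ the reduction via~\cite{AlurEtessamiLaTorrePeled01}), monotonicity (Lemma~\ref{lemma_monotonicity}) lets me restrict to winning conditions with a single parameterizing variable, since "Player~$0$ wins'' is monotone in each $x\in\varF(\varphi)$ and the objective is monotone in the same direction, so identifying all variables with one number $k$ does not change the optimum. The two $\pltlcg$ problems are then reduced to $\pltlcf$ problems by game duality: swapping the roles of the players and replacing $\varphi_{\G}$ by the $\pltlcf$ formula $\neg\varphi_{\G}$ (Lemma~\ref{lemma_pltlnegation}) turns "Player~$0$ wins for $\varphi_{\G}$ w.r.t.\ $\alpha$'' into "Player~$1$ does not win for $\neg\varphi_{\G}$ w.r.t.\ $\alpha$'', and determinacy of these games for a fixed valuation lets me decide the latter by solving a $\pltlcf$ game. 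After these reductions it remains to compute, for a single-variable $\pltlcf$ game $\game_{\F}=(\arena,\varphi)$, the least $k$ for which Player~$0$ wins with respect to the valuation mapping that variable to $k$.

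Next I would bound the search space and organise a binary search. Corollary~\ref{cor_gamesub} supplies a bound $B$, doubly-exponential in $\size{\varphi}$ and linear in the number of vertices of $\arena$ and in the maximal cost, such that if Player~$0$ wins for some valuation she wins for one below $B$; by monotonicity the set of winning values $k$ is upward-closed, so the optimum is the threshold of an upward-closed subset of $\{0,\dots,B\}$, which a binary search locates with $O(\log B)$ — that is, exponentially many — probes. It is precisely here that the unary encoding enters: with unary weights the maximal cost is polynomial in the input, so $B$ is genuinely doubly-exponential and $\log B$ exponential, whereas under a binary encoding the maximal cost, and hence $B$, could be exponentially larger, breaking the intended bound.

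Each probe tests whether Player~$0$ wins $(\arena,\varphi)$ with respect to a fixed valuation $\alpha$ below $B$. For this I would exploit that, for fixed $\alpha$, the language $\{w\mid (w,\alpha)\models\varphi\}$ is $\omega$-regular: adapting the construction of~\cite{Zimmermann13} to accumulated cost instead of elapsed time, one obtains a deterministic parity automaton for this language of triply-exponential size with only exponentially many colors, the counting needed to track the accumulated cost of pending obligations up to $\alpha\le B$ contributing a doubly-exponential blow-up to the state space but not to the number of colors. The product of $\arena$ with this automaton is a parity game of triply-exponential size with exponentially many colors, whose winner — and a finite-state winning strategy — can be computed in triply-exponential time by Schewe's algorithm~\cite{Schewe07}; projecting a winning strategy from the product back to $\arena$ yields the required strategy. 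Since the binary search performs exponentially many probes and each probe costs triply-exponential time, the exponential factor is absorbed and the total running time stays triply-exponential.

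The step I expect to be the main obstacle is the automaton construction for a fixed valuation together with its exact resource analysis. One must track the accumulated cost of each pending $\F$-obligation up to the doubly-exponential bound $B$ while confining the blow-up to the state space, so that the number of colors remains exponential and the resulting parity game is still solvable within triply-exponential time; one must also cope with the fact that cost, unlike elapsed time, need not increase at every step, which the handling of zero-cost transitions in the earlier sections already anticipates. Verifying that this construction stays within triply-exponential size with only exponentially many colors, and that the unary weight assumption is exactly what keeps $B$ doubly-exponential, is the crux; the remaining reductions and the final game-solving step are then routine applications of monotonicity, duality, determinacy, and known parity-game algorithms.
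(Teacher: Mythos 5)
Your overall route is the one the paper itself takes: reduce to $\pltlcf$ winning conditions with a single variable, use duality and determinacy to dispose of the $\pltlcg$ problems, bound the optimal valuation doubly-exponentially via Corollary~\ref{cor_gamesub}, and, for each fixed candidate valuation, build a deterministic parity automaton of triply-exponential size with exponentially many colors (adapting the construction of~\cite{Zimmermann13} to costs), take the product with the arena, and solve the resulting parity game in triply-exponential time~\cite{Schewe07}; the (at most exponentially many) probes of your binary search are absorbed into the triply-exponential bound, exactly as in the paper.

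However, your very first reduction step has a genuine gap. You claim that monotonicity alone lets you identify all variables of $\varphi_{\F}$ with a single value $k$ because ``the objective is monotone in the same direction.'' This is correct for the min-max problem: if Player~$0$ wins w.r.t.\ $\alpha$, then by Lemma~\ref{lemma_monotonicity} she wins w.r.t.\ the constant valuation $k=\max_{x}\alpha(x)$, with the same objective value. It fails for the min-min problem (item~\ref{item_minmingame}): lowering all variables to $\min_x\alpha(x)$ is the wrong monotonicity direction and can destroy winning, while raising them all to a common value changes the objective. Concretely, for $\varphi_{\F}=\F_{\le x_1}\, p \,\wedge\, \F_{\le x_2}\, q$ in an arena whose unique play reaches $p$ with accumulated cost~$0$ but reaches $q$ only with accumulated cost~$100$, the min-min optimum is $0$ (attained by $\alpha(x_1)=0$, $\alpha(x_2)=100$), whereas the best constant valuation is $100$. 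This is precisely why the paper's reduction does more than invoke monotonicity: for the min-min variant one guesses the variable attaining the minimum and eliminates the remaining variables by applying the alternating-color technique (relativization) to them, as in the model-checking discussion (cp.~\cite{AlurEtessamiLaTorrePeled01}); only then is the problem genuinely single-variable. Everything after this step in your proposal goes through as written.
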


Furthermore, the same results hold for $\pldlc$ using appropriate adaptions of the automata constructions presented in~\cite{FaymonvilleZimmermann14,FaymonvilleZimmermann15}. Here, we apply the requirement on negations of $\pldlcbox$ formulas being $\pldlcdiamond$ formulas is applied when dualizing a game with $\pldlcbox$ winning condition into a game with $\pldlcdiamond$ winning condition by swapping the players and negating the winning condition. Without this requirement, we would not necessarily end up with a $\pldlcdiamond$ winning condition, but possibly with a non-wellformed winning condition. 

\begin{theorem}
The $\pldlc$	  model checking optimization problems can be solved in polynomial space and the $\pldlc$ optimization problems for infinite games can be solved in triply-exponential time, if the weights are encoded in unary.
\end{theorem}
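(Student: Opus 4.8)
The plan is to mirror the arguments used for the two $\pltlc$ optimization theorems, substituting the $\ltl$/$\pltlc$ automata constructions by their $\pldl$/$\pldlc$ counterparts from~\cite{FaymonvilleZimmermann14,FaymonvilleZimmermann15}. As a first step I would reduce each of the four optimization problems to a single-variable instance: for the eventuality problems this uses monotonicity (Lemma~\ref{lemma_monotonicity_pldlc}) together with the cost-adapted alternating-color technique of Section~\ref{sec_altcolor} to fix all but one variable, while for games with $\pldlcbox$ winning conditions I would additionally invoke determinacy with respect to a fixed valuation and dualize the game by swapping the two players and negating the winning condition. This is precisely where the extra clause in the definition of $\pldlcbox$ is needed: by Lemma~\ref{lemma_pldlnegation} the negation of the winning condition is an equivalent $\pldlc$ formula, and by definition of $\pldlcbox$ this negation is a well-formed $\pldlcdiamond$ formula, so the dualized game is again a single-variable $\pldlcdiamond$ game and the machinery for the diamond case applies.

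For the model-checking problems I would next appeal to the $\pldlc$ analogues of Corollary~\ref{cor_mcub} and Lemma~\ref{lemma_ubmcalways}, which bound the search space for an optimal valuation exponentially in the size of the formula and linearly in the number of states and the maximal cost of $\sys$. If the bounded search space is empty in the eventuality cases there is no satisfying valuation, and if it already contains every candidate in the always cases the specification holds for every valuation; otherwise it remains to decide, for a fixed value~$k$ of the single variable, whether $\sys$ satisfies the specification with respect to the valuation mapping that variable to~$k$. Here I would construct a nondeterministic Büchi automaton recognizing the models of the fixed-valuation $\pldlc$ formula, adapting the $\pldl$-to-automaton translation of~\cite{FaymonvilleZimmermann14} so that the counter tracks accumulated cost rather than elapsed time. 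Since the weights are given in unary and $k$ is at most exponential, this automaton is of exponential size, its product with $\sys$ can be checked for non-emptiness on the fly in polynomial space, and a binary search over the exponential range yields the optimal $k$ in polynomial space.

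For the infinite-game problems the same reduction leaves a single-variable $\pldlcdiamond$ game, and the $\pldlc$ analogue of Corollary~\ref{cor_gamesub} gives a doubly-exponential upper bound on an optimal valuation. For a fixed value~$k$ below this bound I would turn the fixed-valuation winning condition into a deterministic parity automaton, again using the cost-adapted $\pldlc$ constructions of~\cite{FaymonvilleZimmermann14,FaymonvilleZimmermann15}; because $k$ is doubly-exponential and the weights are unary, this automaton is of triply-exponential size with exponentially many colors. Taking the product with the arena yields a parity game of triply-exponential size with exponentially many colors, which can be solved in triply-exponential time~\cite{Schewe07} and which also delivers a witnessing strategy; ranging $k$ over the doubly-exponential search space by binary search stays within triply-exponential time.

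The main obstacle I expect is not the complexity bookkeeping but the automata constructions themselves: the $\pldlc$-to-automaton translation must handle the regular-expression guards of the diamond and box operators simultaneously with a cost counter that, unlike a step counter, need not increase at every transition. Care is required so that infixes of cost zero and infinite runs of finite cost are treated correctly, exactly as in the alternating-color technique; verifying that the adaptation of~\cite{FaymonvilleZimmermann14} preserves the exponential size bound (respectively the triply-exponential bound after determinization, with only exponentially many colors) is the technical heart of the argument.
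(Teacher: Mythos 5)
Your proposal is correct and follows essentially the same route as the paper: it mirrors the $\pltlc$ optimization arguments (single-variable reduction, exponential respectively doubly-exponential search space, Büchi automaton with on-the-fly emptiness check for model checking, deterministic parity automaton plus the algorithm of~\cite{Schewe07} for games) with the automata constructions of~\cite{FaymonvilleZimmermann14,FaymonvilleZimmermann15} adapted to costs. You also correctly identify the one point the paper singles out, namely that the extra clause in the definition of $\pldlcbox$ is exactly what makes the dualization of a $\pldlcbox$ game into a well-formed $\pldlcdiamond$ game go through.
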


However, for parameterized logics with multiple cost-functions, these results do not remain valid, as one cannot reduce the optimization problems to ones with a single variable, as a variable may bound operators in different dimensions. Thus, one has to keep track multiple costs, which incurs an additional exponential blow-up when done naively. Whether this can be improved is an open question. 

\section{Conclusion}
\label{sec_conc}
\begin{wrapfigure}{r}{.50\textwidth}
\vspace{-25pt}
\begin{center}
\begin{tikzpicture}

\node[rounded corners, draw, black, ultra thick, fill=gray!20] at (0, 0) (ltl) {LTL};
\node[rounded corners, draw, black, ultra thick, fill=gray!20] at (-1, 1) (ldl) {LDL};
\node[rounded corners, draw, black, ultra thick, fill=gray!20] at (1, 1) (pltl) {PLTL};
\node[rounded corners, draw, black, ultra thick, fill=gray!20] at (0, 2) (pldl) {{PLDL}};

\node[rounded corners, draw, black, ultra thick, fill=gray!20] at (1, 3) (cpldl) {{cPLDL}};
\node[rounded corners, draw, black, ultra thick, fill=gray!20] at (2, 2) (cpltl) {{cPLTL}};
\node[rounded corners, draw, black, ultra thick, fill=gray!20] at (2, 4) (mcpldl) {{mult-cPLDL}};
\node[rounded corners, draw, black, ultra thick, fill=gray!20] at (3, 3) (mcpltl) {{mult-cPLTL}};

\path[ultra thick, black,-stealth]
(ltl) edge (ldl)
(ltl) edge  (pltl)
(ldl) edge (pldl)
(pltl) edge (pldl);

\path[ultra thick, black,-stealth]
(pltl) edge (cpltl)
(pldl) edge (cpldl)
(cpltl) edge (cpldl)
(cpltl) edge (mcpltl)
(cpldl) edge (mcpldl)
(mcpltl) edge (mcpldl)
;

\end{tikzpicture}
\end{center}
\caption{Overview over the logics considered in this work.}
\label{fig_logics}
\vspace{-13pt}
\end{wrapfigure}
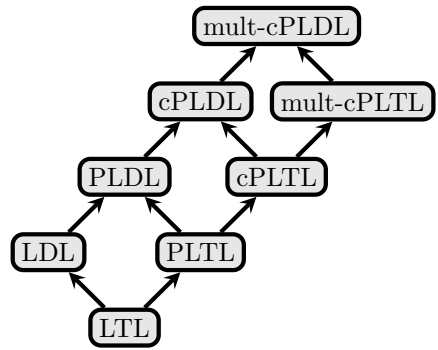

We introduced parameterized temporal logics whose operators bound the accumulated cost instead of time as usual: $\pltlc$ and $\pldlc$ extend $\pltl$ and $\pldl$ to the cost-setting while $\mpltlc$ and $\mpldlc$ extend them to the multi-dimensional cost-setting. The logics we considered here are depicted in Figure~\ref{fig_logics}: the upper four logics were introduced in this work. 

All four new logics retain the attractive algorithmic properties of $\ltl$ like a polynomial space model checking algorithm and a doubly-exponential time algorithm for solving infinite games. For $\pltlc$ and $\pldlc$, even the optimization variants of these problems are not harder than for $\pltl$: polynomial space for model checking and triply-exponential time for solving games, if the weights are encoded in unary. 

However, it is open whether these problems are strictly harder for logics with multiple cost functions or if the weights are encoded in binary. Another open question concerns the complexity of the optimization problem for infinite games: can these problems be solved in doubly-exponential time, i.e., is finding optimal variable valuations as hard as solving games? Note that this question is already open for $\pltl$. Recently, a step towards this direction was made by giving an approximation algorithm for this problem with doubly-exponential running time~\cite{TentrupWeinertZimmermann15}. Finally, one could consider weights from some arbitrary semiring and corresponding weighted parameterized temporal logics. 

\bibliographystyle{plain}
\bibliography{biblio}

\end{document}